\numberwithin{equation}{section}
	\theoremstyle{plain}
	\newtheorem{theorem}{Theorem}
	\numberwithin{theorem}{section}
	\newtheorem{lemma}[theorem]{Lemma}       	
	\newtheorem{proposition}[theorem]{Proposition}
	\theoremstyle{definition}
	\newtheorem{example}[theorem]{Example}
\renewcommand{\[}{\left[}
\newcommand\Cb{\mathds{C}}
\newcommand\Eb{\mathds{E}}
\newcommand\Fb{\mathds{F}}
\newcommand\Pb{\mathds{P}}
\newcommand\Rb{\mathds{R}}
\newcommand\Ac{\mathscr{A}}
\newcommand\Fc{\mathscr{F}}
\newcommand\Gc{\mathscr{G}}
\newcommand\Lc{\mathscr{L}}
\newcommand\Oc{\mathscr{O}}
\newcommand\Pc{\mathscr{P}}
\newcommand\Zc{\mathscr{Z}}
\newcommand\eps{\varepsilon}
\newcommand\om{\omega}
\newcommand\Om{\Omega}
\newcommand\sig{\sigma}
\newcommand\Sig{\Sigma}
\newcommand\Lam{\Lambda}
\newcommand\gam{\gamma}
\newcommand\Gam{\Gamma}
\newcommand\lam{\lambda}
\newcommand\del{\delta}
\newcommand\zb{\bar{z}}
\newcommand\Tb{{\overline{T}}}
\newcommand\rhob{\bar{\rho}}
\newcommand{\vb}{\bar{v}}
\newcommand{\Sigb}{\bar{\Sig}}
\newcommand\alphab{\bar{\alpha}}
\newcommand\betab{\bar{\beta}}
\newcommand\gamb{\bar{\gam}}
\newcommand\lamb{\bar{\lam}}
\newcommand\Psib{\bar{\Psi}}
\newcommand\Phib{\bar{\Phi}}
\newcommand\zetab{\bar{\zeta}}
\newcommand\Cv{\mathbf{C}}
\newcommand\mv{\mathbf{m}}
\newcommand\Ebt{\widetilde{\Eb}}
\newcommand\Pbt{\widetilde{\Pb}}
\newcommand\Act{\widetilde{\Ac}}
\newcommand\Yt{\widetilde{Y}}
\newcommand\Wt{\widetilde{W}}
\newcommand\yt{\widetilde{y}}
\newcommand\sigt{\widetilde{\sig}}
\newcommand\mut{\widetilde{\mu}}
\renewcommand\d{\partial}
\newcommand\ii{\mathtt{i}}
\newcommand\dd{\mathrm{d}}
\newcommand\ee{\mathrm{e}}
\renewcommand\phi{\varphi}
\newcommand\phih{\widehat{\phi}}
\newcommand{\BS}{\text{BS}}
\begin{document}

\title{Options on Bonds: Implied Volatilities from Affine Short-Rate Dynamics}

\author{
Matthew Lorig
\thanks{Department of Applied Mathematics, University of Washington.  \textbf{e-mail}: \url{mlorig@uw.edu}}
\and
Natchanon Suaysom
\thanks{Department of Applied Mathematics, University of Washington.  \textbf{e-mail}: \url{nsuaysom@uw.edu}}
}

\date{This version: \today}

\maketitle


\begin{abstract}
We derive an explicit asymptotic approximation for the implied volatilities of Call options written on bonds assuming the short-rate is described by an affine short-rate model.  For specific affine short-rate models, we perform numerical experiments in order to gauge the accuracy of our approximation.  
\end{abstract}

%
%

\section{Introduction}
\label{sec:intro}
 
\textit{Affine short-rate models} refer to a class of interest rate models in which the price of any zero-coupon bond can be expressed as the exponential of affine function of the instantaneous short-rate.  Well-known affine short-rate models include the Vasicek \cite{vasicek1977equilibrium}, Cox-Ingersoll-Ross {(CIR)} \cite{cox2005theory}, Hull-White \cite{hull1990pricing} and Fong-Vasicek \cite{fong1991fixed} models, as well as their multi-factor versions.  Such models enjoy wide popularity among practitioners and academics alike because these models are flexible enough to fit the observed yield curve and easy to calibrate, due to the closed-form expression for bond prices, and hence yields.
\\[0.5em]
Despite their widespread use in yield-curve modeling, affine short-rate models are rarely used to price options on bonds or calibrate to the implied volatility surface of bond options.  For this task, practitioners assume forward prices of bonds are modeled by a local-stochastic volatility (LSV) model.  In particular, the SABR model \cite{hagan2002managing}, is often used as a model for forward bond prices because it admits an explicit approximation of implied volatility, which can be used {to calibrate} to observed implied volatilities. 
\\[0.5em]
Yet, if one assumes an affine model for the {short-rate}, the resulting forward bond prices will not have SABR dynamics.  As a result, if a bank uses an affine short-rate model to describe the yield curve, and the SABR model to describe the implied volatility surface of options on bonds, the bank is using two different models for the short-rate.  Such a practice clearly introduces arbitrage into the market.
\\[0.5em]
The purpose of this paper is to derive an explicit approximation for the implied volatilities of options on bonds assuming the short-rate is of the affine class.  In doing so, we provide a \textit{unified framework} for calibrating both to observed yields and to observed implied volatilities.  To derive the implied volatility approximation, we use the polynomial expansion method that was introduced by \cite{pagliarani2011analytical} in order to derive approximate prices for options on equity in a scalar setting and {later} extended in \cite{lorig-pagliarani-pascucci-2} in order to obtain approximate implied volatilities in a multi-factor LSV setting. 
\\[0.5em]
The rest of this paper proceeds as follows:
in Section \ref{sec:model} we introduce the class of affine short-rate models that we will consider in this paper and in Section \ref{sec:review} we briefly review how one can compute prices for bonds and options on bonds in the affine short-rate setting.  In Section \ref{sec:lsv} we provide an explicit relation between affine short-rate models and classical local-stochastic volatility models.  We use this relation in Sections \ref{sec:price-asymptotics} and \ref{sec:imp-vol} to develop explicit approximations for the prices of options on bonds and their corresponding implied volatilities.  In Section \ref{sec:examples}, we perform a number of numerical experiments to gauge the accuracy of our implied volatility approximation in four specific affine term-structure models: Vasicek, CIR, two-dimensional CIR and Fong-Vasicek.  Some thoughts on future work are offered in Section \ref{sec:conclusion}.

%
%

\section{Model and Assumptions}
\label{sec:model}
Throughout this paper, we will consider a financial market over a time horizon from zero to $\Tb < \infty$ with no arbitrage and no transactions costs.
As a starting point, we fix a complete probability space $(\Om,\Fc,\Pb)$ and a filtration $\Fb = (\Fc_t)_{0 \leq t \leq \Tb}$.  The probability measure $\Pb$ represents the market's chosen pricing measure taking the \textit{money market account} $M = (M_t)_{0 \leq t \leq \Tb}$ as num\'eraire.
The filtration $\Fb$ represents the history of the market.
\\[0.5em]
We shall assume that money market account $M$ has dynamics of the form
\begin{align}
\dd M_t
	&=	R_t M_t \dd t ,  \label{eq:dM}
\end{align}
where $R = (R_t)_{t \geq 0}$ is the instantaneous {\textit{short-rate}} of interest.  We further suppose that the {short-rate} $R$ is given by
\begin{align}
R_t
	&=	r(Y_t) , \label{eq:R_t}
\end{align}
for some function $r : \Rb^d \to \Rb_+$ and some Markov diffusion process $Y = (Y_t^{(1)}, Y_t^{(2)}, \ldots, Y_t^{(d)})$.  Specifically, we suppose that $Y$ is the unique strong solution of a stochastic differential equation (SDE) of the form
\begin{align}
\dd Y_t
	&=	\mu(t,Y_t) \dd t + \sig(t,Y_t) \dd W_t , \label{eq:Y_t} 
\end{align}
for some functions $\mu : [0,\Tb] \times \Rb^d \to \Rb^d$ and $\sig : [0,\Tb] \times \Rb^d \to \Rb^{d \times d}$, where $W = (W_t^{(1)}, W_t^{(2)}, \ldots, W_t^{(d)})_{t \geq 0}$ is a $d$-dimensional $(\Pb,\Fb)$-Brownian motion.  Thus, the $i$th component of $Y$ is given by
\begin{align}
\dd Y_t^{(i)}
	&=	\mu_i(t,Y_t) \dd t + \sum_{j=1}^d \sig_{i,j}(t,Y_t) \dd W_t^{(j)} . \label{eq:dYi_t}
\end{align}
Lastly, we shall assume that $R$ is an \textit{affine {short-rate} model}, meaning that the functions $(r,\mu,\sig)$ satisfy
\begin{align}
r(y)
	&=	q + \sum_{i=1}^d {\psi_i} y_i , &
\mu(t,y)  
	&= b(t) + \sum_{i=1}^d \beta_i(t) y_i , &
\sigma(t,y)\sigma^\text{Tr}(t,y) 
	&= \ell(t) + \sum_{i=1}^d \lambda_i(t) y_i ,  \label{eq:r-mu-sigma}
\end{align}
for some constants $q \in \Rb$ and ${\psi} \in \Rb^d$ and some functions $b, \beta_i  : [0,\Tb] \to  \mathbb{R}^d$  and $\ell,\lam_i : [0,\Tb] \to \mathbb{R}^{d \times d}$.  Note that $\sig^\text{Tr}$ denotes the transpose of $\sig$.

%
%

\section{Bond and option pricing}
\label{sec:review}
In this section we review some classical results {on} bond and option pricing in an affine short-rate setting.  Our aim here is not to be rigorous, but rather to present in a concise and \textit{formal} manner the results that will be needed in subsequent sections.  For a rigorous treatment of the formal results presented below, we refer the reader to \cite[Chapter 10]{filipovic2009term}.
\\[0.5em]
To begin, for any $T \leq \Tb$ and $\nu \in \Cb^d$, let us define $\Gam( \,\cdot\,,\,\cdot\,;T,\nu) : [0,T] \times \Rb^d \to \Cb^d$ by
\begin{align}
\Gam(t,Y_t;T,\nu) 
	:= \Eb_t \exp \Big( -\int_t^T r(Y_s) \dd s + \sum_{i=1}^d \nu_i Y_T^{(i)} \Big) , \label{eq:Gamma-def}
\end{align}
where we have introduced the short-hand notation $\Eb_t( \, \cdot \, ) := \Eb( \, \cdot \, | \Fc_t )$.  The existence of the function $\Gam$ follows from the Markov property of $Y$.  Formally, $\Gam$ satisfies the Kolmogorov backward partial differential equation (PDE)
\begin{align}
(\d_t + \Ac(t) - r ) \Gam(t,\, \cdot \, ; T,\nu)
	&=	0 , &
\Gam(T,y;T,\nu)
	&=	\exp \Big( \sum_{i=1}^d \nu_i y_i \Big) , \label{eq:Gamma-pde}
\end{align}
where the operator $\Ac$ is the generator of $Y$ under $\Pb$.  Explicitly, the generator $\Ac$ is given by
\begin{align}
\Ac(t)
	&=	\sum_{i=1}^d \mu_i(t,y) \d_{y_i} + \frac{1}{2} \sum_{i=1}^d \sum_{j=1}^d \Big( \sig(t,y) \sig^\text{Tr}(t,y) \Big)_{i,j} \d_{y_i}\d_{y_j} , \label{eq:A}
\end{align}
where $(\sig \sig^\text{Tr})_{i,j}$ denotes its $(i,j)$-th component of $\sig \sig^\text{Tr}$.  One can verify by direct substitution that the solution to \eqref{eq:Gamma-pde} is
\begin{align}
\Gam(t,y;T,\nu) 
	&=	\exp \Big( - F(t;T,\nu) - \sum_{i=1}^d G_i(t;T,\nu) y_i \Big) , \label{eq:Gamma-explicit}
\end{align}
where the functions $F$ and $G=(G_i)_{i=1,2,\ldots,d}$ are the solution of the following system of coupled ordinary differential equations (ODEs)
\begin{align}
\d_t F(t;T,\nu)
	&=	\frac{1}{2}G^{\text{Tr}}(t;T,\nu)\ell(t)G(t;T,\nu)-b^{\text{Tr}}(t)G(t;T,\nu)-q , &
F(T;T,\nu)
	&=	0 , \label{eq:F-ode} \\
\d_t G_i(t;T,\nu)
	&=	\frac{1}{2}G^{\text{Tr}}(t;T,\nu)\lam_i(t)G(t;T,\nu)-\beta_i^{\text{Tr}}(t)G(t;T,\nu)-{\psi}_i , &
G_i(T;T,\nu)
	&=	- \nu_i . \label{eq:G-ode}
\end{align}

\noindent
Now, for any $T \leq \Tb$, let us denote by $B^T = (B_t^T)_{0 \leq t \leq T}$ the value of a \textit{zero-coupon bond} that pays one unit of currency at time $T$.
In the absence of arbitrage the process $B^T/M$ must be a $(\Pb,\Fb)$-martingale.  As such, we have
\begin{align}
\frac{B_t^T}{M_t} 
	&= \Eb_t \Big( \frac{B_T^T}{M_T} \Big) = \Eb_t \Big( \frac{1}{M_T} \Big) ,
\end{align}
where we have used $B_T^T = 1$.  Solving for $B_t^T$, we obtain
\begin{align}
B_t^T
	&=	\Eb_t \Big( \frac{M_t}{M_T} \Big) = \Eb_t \Big( \ee^{- \int_t^T r(Y_s) \dd s} \Big) = \Gam(t,Y_t;T,0) \\
	&=	\exp \Big( - F(t;T,0) - \sum_{i=1}^d G_i(t;T,0) Y_t^{(i)} \Big) , \label{eq:B-explicit}
\end{align}
where the third equality follows from \eqref{eq:Gamma-def} and the fourth equality follows from \eqref{eq:Gamma-explicit}.
\\[0.5em]
Next, let $V = (V_t)_{0 \leq t \leq T}$ denote the value of a European option that pays $\phi( \log B_T^\Tb )$ at time $T$ for some function $\phi : \Rb_- \to \Rb$.  With the aim of finding $V_t$, let $\phih:\Cb \to \Cb$ denote the generalized Fourier transform of $\phi$, which {is defined} as follows
\begin{align}
\phih(\om)
	&:=	\int_{-\infty}^{\infty} \dd x \, \ee^{-\ii \om x} \phi(x) , &
\om
	&=	\om_r + \ii \om_i , &
\om_r, \om_i
	&\in \Rb . \label{eq:ft}
\end{align}
We can recover $\phi$ from $\phih$ using the inverse Fourier transform
\begin{align}
\phi(x)
	&:=	\frac{1}{2\pi} \int_{-\infty}^{\infty} \dd \om_r \, \ee^{\ii \om x} \phih(\om) . \label{eq:ift}
\end{align}
Noting that, in the absence of arbitrage, the process $V/M$ must be a $(\Pb,\Fb)$-martingale, we have
\begin{align}
\frac{V_t}{M_t} 
	&= \Eb_t \Big( \frac{V_T}{M_T} \Big) 
	= 	\Eb_t \Big( \frac{\phi(\log B_T^\Tb)}{M_T} \Big) .
\end{align}
Solving for $V_t$, we have that
\begin{align}
V_t
	&=	\Eb_t \exp \Big( -\int_t^T r(Y_s) \dd s \Big) \phi( \log B_T^\Tb ) \\
	&=	\frac{1}{2\pi} \int_{-\infty}^{\infty} \dd \om_r \, \phih(\om) \Eb_t \exp \Big( -\int_t^T r(Y_s) \dd s \Big) \exp( \ii \om \log B_T^\Tb ) \\
	&=	\frac{1}{2\pi} \int_{-\infty}^{\infty} \dd \om_r \, \phih(\om) \Eb_t \exp \Big( -\int_t^T r(Y_s) \dd s \Big) \Eb_T \exp( \ii \om \log B_T^\Tb ) \\
	&=	\frac{1}{2\pi} \int_{-\infty}^{\infty} \dd \om_r \, \phih(\om) \exp \Big( - \ii \om F(T;\Tb,0) \Big) 
			\Eb_t \exp \Big(  -\int_t^T r(Y_s) \dd s - \sum_{i=1}^d \ii \om G_i(T;\Tb,0) Y_T^{(i)} \Big) \\
	&=	\frac{1}{2\pi} \int_{-\infty}^{\infty} \dd \om_r \, \phih(\om) \exp \Big( - \ii \om F(T;\Tb,0) \Big) \Gam(t,Y_t;T,-\ii \om G(T;\Tb,0))
	=: 	u(t,Y_t;T,\Tb) , \label{eq:V-explicit}
\end{align}
where the second equality follows from \eqref{eq:ift}, the fourth follows from \eqref{eq:B-explicit} and the fifth follows from \eqref{eq:Gamma-def}.
For the particular case of a $T$-maturity European Call option written on $B^\Tb$ we have
\begin{align}
\phi(x)
	&=	( \ee^x - \ee^k )^+ , &
\phih(\om)
	&=	\frac{-\ee^{k- \ii k \om}}{\om^2 + \ii \om } , &
\om_i
	&<	-1 , \label{eq:call-ft}
\end{align}
where $k$ is the $\log$ of the strike.

%
%

\section{Relation to local-stochastic volatility models}
\label{sec:lsv}
While \eqref{eq:V-explicit} in conjunction with \eqref{eq:call-ft} can be used to compute $T$-maturity Call prices on $B^\Tb$, the resulting expression tells us very little about the corresponding implied volatilities.  In this section, we will establish a precise relation between affine short-rate models and local-stochastic volatility models.  This relation will be used in subsequent sections to find an explicit approximation for Call option implied volatilities.
\\[0.5em]
We begin deriving the dynamics of $B^T/M$.  Using \eqref{eq:dM} and \eqref{eq:B-explicit}, we have by It\^o's Lemma that 
\begin{align}
\dd \Big( \frac{B_t^T}{M_t} \Big)
	&=	\Big( \frac{B_t^T}{M_t} \Big) \sum_{j=1}^d \gam_j(t,Y_t;T) \dd W_t^{(j)} , \label{eq:BoverM}
\end{align}
where we have introduced
\begin{align}
\gam_j(t,Y_t;T)
	&:= 	\sum_{i=1}^d \sig_{i,j}(t,Y_t)  \d_{y_i} \log \Gamma(t,Y_t;T,0) 
	=		- \sum_{i=1}^d \sig_{i,j}(t,Y_t) G_i(t;T,0) . \label{eq:gamma-def}
\end{align}
Observe that $B^T/M$ is a $(\Pb,\Fb)$-martingale, as it must be.
\\[0.5em]
It will be helpful at this point to introduce the \textit{$T$-forward probability measure} $\Pbt$, whose relation to $\Pb$ is given by the following Radon-Nikodym derivative
\begin{align}
\frac{\dd \Pbt}{\dd \Pb} 
	&:=	\frac{M_0 B_T^T}{B_0^T M_T} 
	=	\exp \Big( - \frac{1}{2} \sum_{j=1}^d  \int_0^T \gam_j^2(t,Y_t;T) \dd t + \sum_{j=1}^d  \int_0^T \gam_j(t,Y_t;T) \dd W_t^{(j)} \Big) . \label{eq:measure-change}
\end{align}
Note that the the last equality follows from \eqref{eq:BoverM}.    The following lemma will be useful.

\begin{lemma}
Let $\Pi = (\Pi_t)_{0 \leq t \leq \Tb}$ denote the value of a self-financing portfolio and let $\Pi^T = (\Pi_t^T)_{0 \leq t \leq T}$, defined by $\Pi_t^T := \Pi_t/B_t^T$, be the \textit{$T$-forward price of $\Pi$}.  Then the process $\Pi^T$ is a $(\Pbt,\Fb)$-martingale.
\end{lemma}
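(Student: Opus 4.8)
The plan is to use the standard numéraire-change result: if $N$ is a strictly positive self-financing portfolio serving as numéraire, then the price of any self-financing portfolio divided by $N$ is a martingale under the measure associated with $N$. Here the numéraire is the $T$-bond $B^T$, and the associated measure is precisely $\Pbt$ defined in \eqref{eq:measure-change}. So the statement is essentially a restatement of this principle, and the proof amounts to verifying the hypotheses and unwinding the Radon–Nikodym derivative carefully.

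Concretely, I would proceed as follows. First, since $\Pi$ is self-financing, the discounted portfolio $\Pi/M$ is a $(\Pb,\Fb)$-martingale (this is the defining no-arbitrage property of the pricing measure $\Pb$ with numéraire $M$, as used repeatedly in Section~\ref{sec:review}). Second, recall from \eqref{eq:measure-change} that the density process $Z_t := \Eb_t(\dd\Pbt/\dd\Pb) = M_0 B_t^T/(B_0^T M_t)$, where the last equality follows because $B^T/M$ is a $(\Pb,\Fb)$-martingale, as noted after \eqref{eq:gamma-def}. Third, apply the abstract Bayes rule for conditional expectations under a change of measure: for $s \leq t$,
\begin{align}
\Ebt_s\Big( \frac{\Pi_t}{B_t^T} \Big)
	&= \frac{1}{Z_s}\, \Eb_s\Big( Z_t \, \frac{\Pi_t}{B_t^T} \Big)
	= \frac{B_0^T M_s}{M_0 B_s^T} \cdot \frac{M_0}{B_0^T} \, \Eb_s\Big( \frac{B_t^T}{M_t}\cdot\frac{\Pi_t}{B_t^T} \Big)
	= \frac{M_s}{B_s^T}\, \Eb_s\Big( \frac{\Pi_t}{M_t} \Big)
	= \frac{M_s}{B_s^T}\cdot\frac{\Pi_s}{M_s}
	= \frac{\Pi_s}{B_s^T} = \Pi_s^T ,
\end{align}
where the fourth equality uses the $(\Pb,\Fb)$-martingale property of $\Pi/M$. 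This is exactly the $(\Pbt,\Fb)$-martingale property of $\Pi^T$.

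The only genuine subtlety — and the step I would flag as the main obstacle to a fully rigorous treatment — is the justification of the change of measure itself: one must check that $Z$ is a true $(\Pb,\Fb)$-martingale (not merely a local martingale), so that $\Pbt$ is a well-defined probability measure equivalent to $\Pb$ on $\Fc_T$. Since $Z = M_0 B^T/(B_0^T M)$ and $B^T/M$ was already asserted to be a genuine $(\Pb,\Fb)$-martingale in Section~\ref{sec:lsv}, this holds in the present formal setting; in a rigorous treatment one would impose integrability or Novikov-type conditions on $\gam$ as in \cite[Chapter~10]{filipovic2009term}. One should also note that the same integrability needed to make $\Pi/M$ a martingale under $\Pb$ carries over to make $\Pi^T$ integrable under $\Pbt$, since $Z_T$ is bounded away from $0$ and $\infty$ on $[0,T]$ whenever the bond price is. Given the paper's stated aim of presenting these results \emph{formally} rather than rigorously, I would keep the proof at the level of the Bayes-rule computation above and defer the integrability bookkeeping to the cited reference.
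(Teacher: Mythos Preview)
Your proof is correct and follows essentially the same approach as the paper: both identify the density process $Z_t = M_0 B_t^T/(B_0^T M_t)$ and apply the abstract Bayes rule (cited in the paper as \cite[Lemma 5.2.2]{shreve2004stochastic}) together with the $(\Pb,\Fb)$-martingale property of $\Pi/M$. The only cosmetic difference is that the paper starts from $\Pi_t/M_t = \Eb_t(\Pi_s/M_s)$ and rewrites the right-hand side via $Z$, whereas you compute $\Ebt_s(\Pi_t/B_t^T)$ directly; the algebra is identical.
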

\begin{proof}
Define the \textit{Radon-Nikodym derivative process} $Z = (Z_t)_{0 \leq t \leq T}$ by $Z_t := \Eb_t (\dd \Pbt/ \dd \Pb)$.  Using the fact that $\Pi/M$ is a $(\Pb,\Fb)$-martingale as well as \cite[Lemma 5.2.2]{shreve2004stochastic} we have for any $0 \leq t \leq s \leq T$ that
\begin{align}
\frac{\Pi_t}{M_t}
	&=	\Eb_t \Big( \frac{\Pi_s}{M_s} \Big)
	=		Z_t \Ebt_t \Big( \frac{1}{Z_s} \frac{\Pi_s}{M_s} \Big)
	=		\frac{B_t^T}{M_t} \Ebt_t \Big( \frac{M_s}{B_s^T} \frac{\Pi_s}{M_s} \Big) , \label{eq:1}
\end{align}
where $\Ebt$ denotes an expectation under $\Pbt$.  Dividing both sides of equation \eqref{eq:1} by $B_t^T$ and canceling common factors of $M_t$ and $M_s$, we obtain
\begin{align}
\Pi_t^T 
	&=	\frac{\Pi_t}{B_t^T}
	 =	\Ebt_t \frac{\Pi_s}{B_s^T}
	 =	\Ebt_t  \Pi_s^T  ,
\end{align}
which establishes that $\Pi^T$ is a $(\Pbt,\Fb)$-martingale, as claimed.
\end{proof}

\noindent
Now, let us denote by $X = (X_t)_{0 \leq t \leq T}$ the $\log$ of the $T$-forward price of a $\Tb$-maturity bond $B^\Tb$.  We have
\begin{align}
X_t
	&:=	\log \Big( \frac{B_t^\Tb}{B_t^T} \Big) \label{eq:X-def} \\
	&= 	F(t;T,0) - F(t;\Tb,0) + \sum_{i=1}^d \big( G_i(t;T,0) - G_i(t;\Tb,0) \big) Y_t^{(i)}  , \label{eq:X=Y}
\end{align}
where the second equality follows from \eqref{eq:B-explicit}.  It follows from the explicit relationship \eqref{eq:X=Y} between $X$ and $Y$ that the process $(X,\Yt) := (X_t,Y_t^{(2)},\ldots,Y_t^{(d)})_{0 \leq t \leq T}$ is a $d$-dimensional Markov process.  We are now in a position to state the main result of this section.

\begin{proposition}
Let $V^T = V/B^T$ denote the $T$-forward price of an option that pays $\phi( \log B_T^\Tb)$ at time $T$.
Then there exists a function $v(\,\cdot\,,\,\cdot\,,\,\cdot\,;T,\Tb): [0,T] \times \Rb_- \times \Rb^{d-1} \to \Rb$ such that 
\begin{align}
V_t^T
	&=	v(t,X_t,\Yt_t;T,\Tb) .
\end{align}
Moreover, 
the function $v$ satisfies the following PDE
\begin{align}
(\d_t + \Act(t)) v(t,\,\cdot\,,\,\cdot\,;T,\Tb)
	&=	0 , &
{v(T,x,\yt;T,\Tb)}
	&=	\phi(x) , \label{eq:v-pde}
\end{align}
where $\Act$ is the generator of $(X,\Yt)$ under $\Pbt$.  Explicitly, $\Act$ is given by
\begin{align}
\Act(t)
	&= \frac{1}{2} \sum_{i=1}^d \sum_{j=1}^d \Big( \sigt(t,x,\yt;T,\Tb) \sigt^\text{Tr}(t,x,\yt;T,\Tb) \Big)_{i,j} \\ &\quad
			\times \Big( G_i(t;T,0) - G_i(t;\Tb,0) \Big) \Big( G_j(t;T,0) - G_j(t;\Tb,0) \Big) (\d_x^2-\d_x)  \\ & \quad
			+\sum_{i=2}^d \Big( \mut_i(t,x,\yt;T,\Tb) - \sum_{j=1}^d \Big( \sigt(t,x,\yt;T,\Tb) \sigt^\text{Tr}(t,x,\yt;T,\Tb) \Big)_{i,j} G_j(t;T,0) \Big) \d_{y_i} \\&\quad
			+ \frac{1}{2} \sum_{i=2}^d \sum_{j=2}^d \Big( \sigt(t,x,\yt;T,\Tb) \sigt^\text{Tr}(t,x,\yt;T,\Tb) \Big)_{i,j} \d_{y_i}\d_{y_j}  \\ & \quad
			+ \sum_{i=2}^d \sum_{j=1}^d \Big( \sigt(t,x,\yt;T,\Tb) \sigt^\text{Tr}(t,x,\yt;T,\Tb) \Big)_{i,j} \Big( G_j(t;T,0) - G_j(t;\Tb,0) \Big) \d_x \d_{y_i} ,
		\label{eq:A-tilde}
\end{align}
where the functions
$\mut(\,\cdot\,,\,\cdot\,,\,\cdot\,;T,\Tb) : [0,T] \times \Rb_- \times \Rb^{d-1} \to \Rb^d$ and 
$\sigt(\,\cdot\,,\,\cdot\,,\,\cdot\,;T,\Tb) : [0,T] \times \Rb_- \times \Rb^{d-1} \to \Rb^{d \times d}$
are given by
\begin{align}
\mut(t,x,\yt;T,\Tb)
	&:= \mu(t,\eta(t,x,\yt;T,\Tb),\yt) , &
\sigt(t,x,\yt;T,\Tb)
	&:= \sig(t,\eta(t,x,\yt;T,\Tb),\yt) , \label{eq:mu-tilde-sigma-tilde}
\end{align}
the function $\eta(\,\cdot\,,\,\cdot\,,\,\cdot\,;T,\Tb) : [0,T] \times \Rb_- \times \Rb^{d-1} \to \Rb$ is defined as follows
\begin{align}
\eta(t,x,\yt;T,\Tb)
	&=	\frac{F(t;T,0) - F(t;\Tb,0) - x  + \sum_{i=2}^d (G_i(t;T,0) - G_i(t;\Tb,0) ) y_i}{G_1(t;\Tb,0)-G_1(t;T,0)} ,\label{eq:eta-def}
\end{align}
and the functions $F$ and $G_i$ satisfy the system of coupled ODEs \eqref{eq:F-ode} and \eqref{eq:G-ode}. 
\end{proposition}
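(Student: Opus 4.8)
The plan is to exploit the martingale property established in the Lemma together with the Markov property of $(X,\Yt)$ to obtain the representation $V_t^T = v(t,X_t,\Yt_t;T,\Tb)$, reducing the claim to identifying the generator $\Act$ of $(X,\Yt)$ under $\Pbt$, which I would then compute by a Girsanov change of measure followed by the linear change of spatial variable $Y^{(1)}\mapsto X$. For the first part: since the option is a self-financing portfolio, the Lemma gives that $V^T = V/B^T$ is a $(\Pbt,\Fb)$-martingale, so $V_t^T = \Ebt_t V_T^T = \Ebt_t\big(\phi(\log B_T^\Tb)/B_T^T\big) = \Ebt_t\,\phi(\log B_T^\Tb)$, using $B_T^T = 1$. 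Applying $B_T^T = 1$ once more in \eqref{eq:X-def} gives $X_T = \log B_T^\Tb$, hence $V_t^T = \Ebt_t\,\phi(X_T)$. As noted below \eqref{eq:X=Y}, $(X,\Yt)$ is a (time-inhomogeneous) Markov process, and this persists under $\Pbt$ because, by Step~2 below, $Y$ solves under $\Pbt$ an SDE whose coefficients are functions of $(t,Y_t)$; hence there is a function $v$ with $V_t^T = v(t,X_t,\Yt_t;T,\Tb)$, and Feynman--Kac yields \eqref{eq:v-pde} with terminal datum $\phi(x)$, $\Act$ being the $\Pbt$-generator of $(X,\Yt)$. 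It remains to compute $\Act$ explicitly.

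\emph{Step 2: $\Pbt$-dynamics of $Y$.} The Radon--Nikodym derivative \eqref{eq:measure-change} is a Dol\'eans--Dade exponential with integrand $(\gam_j(\,\cdot\,,Y_\cdot;T))_j$, so by Girsanov $\Wt_t^{(j)} := W_t^{(j)} - \int_0^t \gam_j(s,Y_s;T)\,\dd s$ defines a $d$-dimensional $(\Pbt,\Fb)$-Brownian motion. Substituting $\gam_j = -\sum_i \sig_{i,j}(t,Y_t)\,G_i(t;T,0)$ from \eqref{eq:gamma-def}, the drift correction of $Y^{(i)}$ is $\sum_j \sig_{i,j}\gam_j = -\sum_j (\sig\sig^{\text{Tr}})_{i,j}(t,Y_t)\,G_j(t;T,0)$, so that under $\Pbt$
\begin{align*}
\dd Y_t^{(i)}
	= \Big( \mu_i(t,Y_t) - \sum_{j=1}^d (\sig\sig^{\text{Tr}})_{i,j}(t,Y_t)\,G_j(t;T,0) \Big)\dd t + \sum_{j=1}^d \sig_{i,j}(t,Y_t)\,\dd\Wt_t^{(j)} .
\end{align*}

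\emph{Step 3: change of variable and dynamics of $X$.} Solving the affine identity \eqref{eq:X=Y} for $Y_t^{(1)}$ --- legitimate precisely when $G_1(t;T,0)\neq G_1(t;\Tb,0)$ --- gives $Y_t^{(1)} = \eta(t,X_t,\Yt_t;T,\Tb)$ with $\eta$ as in \eqref{eq:eta-def}, whence $\mu(t,Y_t) = \mut(t,X_t,\Yt_t;T,\Tb)$ and $\sig(t,Y_t) = \sigt(t,X_t,\Yt_t;T,\Tb)$ as in \eqref{eq:mu-tilde-sigma-tilde}. Differentiating \eqref{eq:X=Y} by It\^o's formula and inserting the $\Pbt$-dynamics of Step~2, the martingale part of $X$ is $\sum_j\big(\sum_i (G_i(t;T,0)-G_i(t;\Tb,0))\sig_{i,j}(t,Y_t)\big)\dd\Wt_t^{(j)}$, so $\dd\<X\>_t/\dd t = \sum_{i,k}(\sig\sig^{\text{Tr}})_{i,k}(t,Y_t)(G_i(t;T,0)-G_i(t;\Tb,0))(G_k(t;T,0)-G_k(t;\Tb,0))$ and $\dd\<X,Y^{(i)}\>_t/\dd t = \sum_j (\sig\sig^{\text{Tr}})_{i,j}(t,Y_t)(G_j(t;T,0)-G_j(t;\Tb,0))$. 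For the drift of $X$ one can either proceed directly, collecting the $\d_t F$ and $\d_t G_i$ terms via the ODEs \eqref{eq:F-ode}--\eqref{eq:G-ode} together with the drift of $Y$, or --- more cleanly --- observe that $\ee^{X} = B^\Tb/B^T$ is the $T$-forward price of the traded bond $B^\Tb$, hence a $(\Pbt,\Fb)$-martingale by the Lemma, so that $X = \log\ee^{X}$ has drift $-\thalf\,\dd\<X\>_t/\dd t$.

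\emph{Step 4: assembling $\Act$.} Reading off the generator of the $\Pbt$-diffusion $(X_t,Y_t^{(2)},\ldots,Y_t^{(d)})$ and replacing $\mu(t,Y_t)$ and $(\sig\sig^{\text{Tr}})(t,Y_t)$ throughout by $\mut(t,X_t,\Yt_t;T,\Tb)$ and $(\sigt\sigt^{\text{Tr}})(t,X_t,\Yt_t;T,\Tb)$ via $Y_t^{(1)} = \eta(\,\cdot\,)$: the $x$-variance and the $x$-drift combine into the $\thalf(\d_x^2-\d_x)$ term (the drift being exactly minus half the variance), the $y_i$-drifts for $i\geq 2$ give the $\d_{y_i}$ terms, the $y_iy_j$-variances for $i,j\geq 2$ give the $\d_{y_i}\d_{y_j}$ terms, and the $X$--$Y^{(i)}$ covariations give the $\d_x\d_{y_i}$ terms; this is exactly \eqref{eq:A-tilde}. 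The only place needing genuine care is the drift computation in Step~3 --- verifying that the $\d_t F$, $\d_t G_i$ contributions cancel against the drift of $Y$ to leave precisely $-\thalf\<X\>$ --- which the martingale argument bypasses; everything else is substitution and matrix algebra. As throughout this section, we take for granted the regularity making $(X,\Yt)$ a well-posed Markov diffusion for which Feynman--Kac applies, together with $G_1(t;T,0)\neq G_1(t;\Tb,0)$ on $[0,T]$.
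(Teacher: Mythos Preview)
Your proposal is correct and follows essentially the same route as the paper: Girsanov to obtain the $\Pbt$-dynamics of $Y$, the substitution $Y^{(1)} = \eta(t,X,\Yt)$, and It\^o's formula to read off the generator of $(X,\Yt)$. The only stylistic difference is that the paper obtains $\dd X_t$ directly from the known dynamics \eqref{eq:BoverM} of $B^T/M$ and $B^{\Tb}/M$ (so the drift $-\thalf\,\dd\<X\>_t/\dd t$ appears automatically without touching the ODEs for $F,G$), whereas you propose differentiating \eqref{eq:X=Y} and then invoking the $(\Pbt,\Fb)$-martingale property of $\ee^X$; both arrive at the same expression.
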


\begin{proof}
Noting that $V^T$ is a $(\Pbt,\Fb)$-martingale, we have
\begin{align}
V_t^T
	&=	\frac{V_t}{B_t^T}
	 =	\Ebt_t \big(  \frac{V_T}{B_T^T} \Big)
	 =	\Ebt_t\phi(\log B_T^\Tb)
	 =	\Ebt_t\phi(X_T)
	 =:	v(t,X_t{,\Yt_t};T,\Tb) ,
\end{align}
where the existence of the function $v$ follows from the Markov property of $(X,\Yt)$.  The function $v$ satisfies the {Kolmogorov} backward PDE \eqref{eq:v-pde} where $\Act$ denotes the generator of $(X,\Yt)$ under $\Pbt$.  To derive the expression \eqref{eq:A-tilde} for $\Act$, we note that, by Girsanov's theorem and \eqref{eq:measure-change}, the process $\Wt := (\Wt_t^{(1)}, \Wt_t^{(2)}, \ldots, \Wt_t^{(d)})_{0 \leq t \leq T}$, defined as follows
\begin{align}
\Wt_t^{(j)}
	&:=	- \int_0^t \gam_j(s,Y_s;T) \dd s + W_t^{(j)} , \label{eq:W_tilde}
\end{align}
is a $d$-dimensional $(\Pbt,\Fb)$-Brownian motion.  Thus, we have from equations \eqref{eq:dYi_t}, \eqref{eq:gamma-def} and \eqref{eq:W_tilde} that
\begin{align}
\dd Y_t^{(i)}
	&=	\Big( \mu_i(t,Y_t) - \sum_{j=1}^d \Big( \sig(t,Y_t) \sig^\text{Tr}(t,Y_t) \Big)_{i,j} G_j(t;T,0)  \Big) \dd t 
			+ \sum_{j=1}^d \sig_{i,j}(t,Y_t)\dd \Wt_t^{(j)} \\
	&=	\Big( \mut_i(t,X_t,{\Yt_t};T,\Tb) - \sum_{j=1}^d \Big( {\sigt(t,X_t,\Yt_t;T,\Tb)} \sigt^\text{Tr}(t,X_t,{\Yt_t};T,\Tb) \Big)_{i,j} G_j(t;T,0)  \Big) \dd t \\ & \quad
			+ \sum_{j=1}^d \sigt_{i,j}(t,X_t,{\Yt_t};T,\Tb)\dd \Wt_t^{(j)} , \label{eq:system_Ahat}
\end{align}
where, in the the second equality, we have used $Y_t^{(1)} = \eta(t,X_t,\Yt_t;T,\Tb)$, which follows from from \eqref{eq:X=Y}.
Similarly, using \eqref{eq:BoverM} and \eqref{eq:X-def}, we find using It\^o's Lemma that
\begin{align}
\dd X_t 
	&= -\frac{1}{2} \sum_{i=1}^d \sum_{j=1}^d \Big( \sig(t,Y_t) \sig^\text{Tr}(t,Y_t) \Big)_{i,j} \\ & \quad
			\times \Big( G_i(t;T,0) - G_i(t;\Tb,0) \Big) \Big( G_j(t;T,0) - G_j(t;\Tb,0) \Big) \dd t \\ & \quad
		+ \sum_{i=1}^d \sum_{j=1}^d  \sig_{i,j}(t,Y_t) \Big( G_i(t;T,0) - G_i(t;\Tb,0) \Big) \dd \Wt_t^{(j)} \\
	&= -\frac{1}{2} \sum_{i=1}^d \sum_{j=1}^d \Big( \sigt(t,X_t,{\Yt_t};T,\Tb) \sigt^\text{Tr}(t,X_t,{\Yt_t};T,\Tb) \Big)_{i,j} \\ & \quad
			\times \Big( G_i(t;T,0) - G_i(t;\Tb,0) \Big) \Big( G_j(t;T,0) - G_j(t;\Tb,0) \Big) \dd t \\ & \quad
		+ \sum_{i=1}^d \sum_{j=1}^d  \sigt_{i,j}(t,X_t,{\Yt_t};T,\Tb) \Big( G_i(t;T,0) - G_i(t;\Tb,0) \Big) \dd \Wt_t^{(j)} . \label{eq:dX}
\end{align}
The explicit expression \eqref{eq:A-tilde} for the generator $\Act$ follows from \eqref{eq:system_Ahat} and \eqref{eq:dX}.
\end{proof}

\noindent
Observe that $\ee^X = B^\Tb/B^T$ is a strictly positive $(\Pbt,\Fb)$-martingale.  Thus, the process $(X,\Yt)$ has the same form as a local-stochastic volatility model where $X$ represents the $\log$ of the $T$-forward price of an risky asset (e.g., stock, index, etc.) and $\Yt$ represents $(d-1)$ non-local factors of volatility.

\begin{example}
Consider a one-factor affine {short-rate} model ($d = 1$).  Then $X$ has the form of a (pure) local volatility model with generator
\begin{align}
\Act(t)	
	&=	c(t,x) (\d_x^2 - \d_x ) , &
c(t,x)
	&:=	\tfrac{1}{2} \sigt^2(t,x;T,\Tb) {\Big(  G(t;T,0) - G(t;\Tb,0) \Big)}^2, \label{eq:A-1d}
\end{align}
where we have omitted the argument $\yt$ as it plays no role.
\end{example}

\begin{example}
Consider a two-factor affine short-rate model ($d=2$). Then the process $(X,Y^{(2)})$ has the form of a local-stochastic volatility model with a single non-local factor of volatility.  The generator in this case, is given by
\begin{align}
\Act(t)	
	&=	c(t,x,y_2) (\d_x^2 - \d_x ) + f(t,x,y_2) \d_{y_2} + g(t,x,y_2) \d_{y_2}^2  + h(t,x,y_2) \d_x \d_{y_2}, \label{eq:A-2d}
\end{align}
where the functions $c$, $f$, $g$ and $h$ are given by 
\begin{align}
c(t,x,y_2)
	&:=	 \tfrac{1}{2}{\Big(}\sigt^2_{1,1}(t,x,y_2;T,\Tb)+\sigt^2_{1,2}(t,x,y_2;T,\Tb){\Big)} {\Big(}G_1(t;T,0) - G_1(t;\Tb,0){\Big)}^2  \\ & \quad
			+	{\Big(}\sigt_{1,1}(t,x,y_2;T,\Tb)\sigt_{2,1}(t,x,y_2;T,\Tb)+\sigt_{1,2}(t,x,y_2;T,\Tb)\sigt_{2,2}(t,x,y_2;T,\Tb){\Big)} \\ & \quad
				\times {\Big(}G_1(t;T,0) - G_1(t;\Tb,0){\Big)}{\Big(}G_2(t;T,0) - G_2(t;\Tb,0){\Big)} \\ & \quad
			+ {\tfrac{1}{2}}{\Big(}\sigt^2_{2,1}(t,x,y_2;T,\Tb)+\sigt^2_{2,2}(t,x,y_2;T,\Tb){\Big)} {\Big(}G_2(t;T,0) - G_2(t;\Tb,0){\Big)}^2, \\
f(t,x,y_2)
	&:=	\mut_2(t,x,y_2;T,\Tb)  -{\Big(}\sigt^2_{2,1}(t,x,y_2;T,\Tb)+\sigt^2_{2,2}(t,x,y_2;T,\Tb){\Big)}G_2(t;T,0) \\ & \quad
			- {\Big(}\sigt_{1,1}(t,x,y_2;T,\Tb)\sigt_{2,1}(t,x,y_2;T,\Tb) \\ & \quad
			+ \sigt_{1,2}(t,x,y_2;T,\Tb)\sigt_{2,2}(t,x,y_2;T,\Tb){\Big)}G_1(t;T,0), \\
g(t,x,y_2)
	&:=	\tfrac{1}{2}{\Big(}\sigt^2_{2,1}(t,x,y_2;T,\Tb) + \sigt^2_{2,2}(t,x,y_2;T,\Tb){\Big)}, \\
h(t,x,y_2)
	&:= {\Big(}\sigt^2_{2,1}(t,x,y_2;T,\Tb)+\sigt^2_{2,2}(t,x,y_2;T,\Tb){\Big)} {\Big(}G_2(t;T,0) - G_2(t;\Tb,0){\Big)} \\ & \quad
			+ {\Big(}\sigt_{1,1}(t,x,y_2;T,\Tb)\sigt_{2,1}(t,x,y_2;T,\Tb) \\ & \quad
			+  \sigt_{1,2}(t,x,y_2;T,\Tb)\sigt_{2,2}(t,x,y_2;T,\Tb){\Big)} {\Big(}G_1(t;T,0) - G_1(t;\Tb,0){\Big)}.
\end{align}
\end{example}


%
%

\section{Option price asymptotics}
\label{sec:price-asymptotics}
We have from \eqref{eq:v-pde} that $v$ satisfies a parabolic PDE of the form
\begin{align}
( \d_t + \Act(t) ) v(t, \,\cdot \,)
	&=	0 , &
\Act(t)
	&=	\sum_{|\alpha| \leq 2} a_\alpha(t,z) \d_z^\alpha , &
v(T, \, \cdot \,)
	&=	\phi , \label{eq:pde-form}
\end{align}
where $z := (x, y_2, \ldots, y_d)$.  Note that, for brevity, we have omitted the dependence on $T$ and $\Tb$ and we have introduced standard multi-index notation
\begin{align}
\alpha
	&=	(\alpha_1, \alpha_2, \dots, \alpha_d) , &
\d_z^\alpha
	&=	\prod_{i=1}^d \d_{z_i}^{\alpha_i} , &
z^\alpha
	&=	\prod_{i=1}^d {z_i}^{\alpha_i} , &
| \alpha |
	&=	\sum_{i=1}^d \alpha_i , &
\alpha!
	&=	\prod_{i=1}^d \alpha_i! .
\end{align}
In general there is no explicit solution to PDEs of the form \eqref{eq:pde-form}.  In this section, we will show in a formal manner how an explicit approximation of $v$ can be obtained by using a simple Taylor series expansion of the coefficients $a_\alpha$ of $\Act$.  The method described below was introduced for scalar diffusions in \cite{pagliarani2011analytical} and subsequently extended to $d$-dimensional diffusions in \cite{lorig-pagliarani-pascucci-2,lorig-pagliarani-pascucci-4}.  
\\[0.5em]
To begin, for any $\eps \in [0,1]$ and $\zb: [0,T] \to \Rb^d$, let $v^\eps$ be the unique classical solution to
\begin{align}
0
	&=	( \d_t + \Act^\eps(t) ) v^\eps(t, \,\cdot \,) , &
v^\eps(T, \, \cdot \,)
	&=	\phi , \label{eq:v-eps-pde} 
\end{align}
where the operator $\Act^\eps$ is defined as follows
\begin{align}
\Act^\eps(t)
	&:=	\sum_{|\alpha| \leq 2}  a_\alpha^\eps(t,z) \d_z^\alpha , &
	&\text{with}&
a_\alpha^\eps
	&:=	a_\alpha(t,\zb(t) + \eps(z - \zb(t))) , \label{eq:A-eps}
\end{align}
Observe that $\Act^\eps |_{\eps = 1} = \Act$ and thus $v^\eps |_{\eps=1} = v$.  We will seek an approximate solution of \eqref{eq:v-eps-pde} by expanding $v^\eps$ and $\Act^\eps$ in powers of $\eps$.  Our approximation for $v$ will be obtained by setting $\eps = 1$ in our approximation for $v^\eps$.   We have
\begin{align}
v^\eps
	&=	\sum_{n=0}^\infty \eps^n v_n , &
\Act^\eps(t)
	&=	\sum_{n=0}^\infty \eps^n \Act_n(t) , \label{eq:expansion}
\end{align}
where the functions $(v_n)$ are, at the moment, unknown, and the operators $(\Act_n)$ are given by
\begin{align}
\Act_n(t)
	&=	\frac{\dd^n }{\dd \eps^n} \Act^\eps |_{\eps=0}
	=		\sum_{|\alpha| \leq 2} a_{\alpha,n}(t,z) \d_z^\alpha , &
a_{\alpha,n}
	=		\sum_{|\beta|=n} \frac{1}{\beta!} (z - \zb(t))^\beta \d_z^\beta a_\alpha(t,\zb(t))  .
\end{align}
Note that $a_{\alpha,n}(t, \, \cdot \,)$ is the sum of the $n$th order terms in the Taylor series expansion of $a_\alpha(t,\,\cdot\,)$ about the point $\zb(t)$.  Inserting the expansions from \eqref{eq:expansion} for $v^\eps$ and $\Act^\eps$ into PDE \eqref{eq:v-eps-pde} and collecting terms of like order in $\eps$ we obtain
\begin{align}
&\Oc(\eps^0):&
0
	&=	( \d_t + \Act_0(t) ) v_0(t, \,\cdot \,) , &
v_0(T, \, \cdot \,)
	&=	\phi  , \label{eq:v0-pde} \\
&\Oc(\eps^n):&
0
	&=	( \d_t + \Act_0(t) ) v_n(t, \,\cdot \,)  + \sum_{k=1}^n \Act_k(t) v_{n-k}(t,\,\cdot\,) , &
v_n(T, \, \cdot \,)
	&=	0  . \label{eq:vn-pde}
\end{align}
Now, observe that the coefficients $(a_{\alpha,0})$ of $\Act_0$ do not depend on $z$.  Thus, $\Act_0$ is the generator of a $d$-dimensional Brownian motion with a time-dependent drift vector and covariance matrix.  As such, $v_0$ is given by
\begin{align}
v_0(t,z)
	&=	\Pc_0(t,T)\varphi(z)
	=		\int_{\Rb^d} \dd z' \, p_0(t,z;T,z') \phi(z') . \label{eq:v0-explicit}
\end{align}
where $\Pc_0$ is the semigroup generated by $\Act_0$ and $p_0$ is the associated transition density (i.e., the solution to \eqref{eq:v0-pde} with $\phi = \del_{z'}$).  Explicitly, we have
\begin{align}
p_0(t,z;T,z')
	&=	\tfrac{1}{\sqrt{(2\pi)^d|\Cv(t,T)|}}
			{\exp\left(-\frac{1}{2} (z'-z-\mv(t,T))^\text{Tr} \Cv^{-1}(t,T) (z'-z-\mv(t,T)) \right)} , \label{eq:p0}
\end{align}
where $\mv$ and $\Cv$ are given by
\begin{align}
\mv(t,T)
		&:=	\int_t^T \dd s \, m(s) , &
\Cv(t,T)
		&:=	\int_t^T \dd s \, A(s) , \label{eq:m-and-C}
\end{align}
and $m$ and $A$ are, respectively, the instantaneous drift vector and covariance matrix
\begin{align}
m(s)
		&:=	\begin{pmatrix}
				a_{(1,0,\cdots,0),0}(s) \\ a_{(0,1,\cdots,0),0}(s) \\ \vdots \\  a_{(0,0,\cdots,1),0}(s)
				\end{pmatrix} , &
A(s)
		&:= \begin{pmatrix}
				2a_{(2,0,\cdots,0),0}(s) & a_{(1,1,\cdots,0),0}(s) & \ldots &  {a_{(1,0,\cdots,1),0}(s)} \\
				a_{(1,1,\cdots,0),0}(s) & 2a_{(0,2,\cdots,0),0}(s) & \ldots &  a_{(0,1,\cdots,1),0}(s) \\
				\vdots & \vdots & \ddots & \vdots \\
				a_{(1,0,\cdots,1),0}(s) & a_{(0,1,\cdots,1),0}(s) & \ldots &  2 a_{(0,0,\cdots,2),0}(s) \\
				\end{pmatrix} .
\end{align}
By Duhamel's principle, the solution $v_n$ of \eqref{eq:vn-pde} is
\begin{align}
v_n(t,z)
	&=	\sum_{k=1}^n \int_t^T \dd t_1 \, \Pc_0(t,t_1) \Act_k(t_1) v_{n-k}(t_1,z) \\
	&=	\sum_{k=1}^n \sum_{i \in I_{n,k}}
      \int_{t}^T \dd t_1 \int_{t_1}^T \dd t_2 \cdots \int_{t_{k-1}}^T \dd t_k \\ & \qquad 
       \Pc_0(t,t_1) \Ac_{i_1}(t_1)
       \Pc_0(t_1,t_2) \Ac_{i_2}(t_2) \cdots
       \Pc_0(t_{k-1},t_k) \Ac_{i_k}(t_k)
       \Pc_0(t_k,T)\phi(z) , \label{eq:vn-explicit} \\
I_{n,k}
    &= \{ i = (i_1, i_2, \cdots , i_k ) \in \mathds{N}^k : i_1 + i_2 + \cdots + i_k = n \} .
            \label{eq:Ink}
\end{align}
While the expression \eqref{eq:vn-explicit} for $v_n$ is explicit, it is not easy to compute as operating on a function with $\Pc_0$ requires performing a $d$-dimensional integral.  The following proposition establishes that $v_n$ can be expressed as a differential operator acting on $v_0$.

\begin{proposition}
\label{thm:vn}
The solution $v_n$ of PDE \eqref{eq:vn-pde} is given by
\begin{align}
v_n(t,z)
    &=  \Lc_n(t,T) v_0(t,z) , \label{eq:un} 
\end{align}
where $\Lc$ is a linear differential operator, which is given by
\begin{align}
\Lc_n(t,T)
    &=  \sum_{k=1}^n \sum_{i \in I_{n,k}}
        \int_{t}^T \dd t_1 \int_{t_1}^T \dd t_2 \cdots \int_{t_{k-1}}^T \dd t_k
        \Gc_{i_1}(t,t_1)
        \Gc_{i_2}(t,t_2) \cdots
         \Gc_{i_k}(t,t_k) , \label{eq:Ln}
\end{align}
the index set $I_{n,k}$ as defined in \eqref{eq:Ink} and the operator $\Gc_i$ is given by
\begin{align}
\Gc_i(t,t_k) 
	&:= 	\sum_{|\alpha |\leq 2}  a_{\alpha,i}(t_k,\Zc(t,t_k)) \d_z^\alpha , &
\Zc(t,t_k)
  &:= z + \mv(t,t_k) + \Cv(t,t_k) \nabla_z . 	\label{eq:Gc.def}
\end{align}
\end{proposition}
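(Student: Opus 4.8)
The plan is to prove \eqref{eq:un}--\eqref{eq:Gc.def} by induction on $n$, using the Duhamel representation \eqref{eq:vn-explicit} together with a single key commutation identity: for any polynomial coefficient $q$ and any multi-index $\alpha$,
\begin{align}
\Pc_0(t,s) \big( q(z) \d_z^\alpha \psi \big)(z)
	&=	q\big( z + \mv(t,s) + \Cv(t,s) \nabla_z \big) \d_z^\alpha \, \Pc_0(t,s) \psi(z) . \label{eq:key-commute}
\end{align}
This identity is what converts the nested propagators in \eqref{eq:vn-explicit}, which act on $\phi$ \emph{after} the intermediate operators $\Act_k$, into differential operators $\Gc_i$ that all act on the single leftmost propagator $\Pc_0(t,T)\phi = v_0$. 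First I would establish \eqref{eq:key-commute}. Since $\Act_0$ generates a Gaussian semigroup, $\Pc_0(t,s)\psi(z) = \Eb[\psi(z + \mv(t,s) + \Cv^{1/2}(t,s) N)]$ for a standard normal $N$; differentiating under the expectation gives $\d_{z}\Pc_0(t,s)\psi = \Pc_0(t,s)\d_z\psi$, so $\Pc_0$ commutes with $\d_z^\alpha$. For the polynomial prefactor, it suffices by linearity to treat $q(z) = z_i$, and then one checks that multiplication by $z_i$ on the left of $\Pc_0(t,s)$ equals the operator $(z_i + \mv_i(t,s) + \sum_j \Cv_{ij}(t,s)\d_{z_j})$ applied after $\Pc_0(t,s)$; this is the standard Gaussian integration-by-parts / shift identity $\Eb[(z_i + \mv_i + (\Cv^{1/2}N)_i)\,g(N)] = (z_i + \mv_i)\Eb[g(N)] + \sum_j\Cv_{ij}\d_{z_j}\Eb[g(N)]$, applied with $g(N) = \psi(z+\mv+\Cv^{1/2}N)$. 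Iterating over the monomials of $q$ yields \eqref{eq:key-commute}, with the understanding that the entries of $\Zc(t,s) = z + \mv(t,s) + \Cv(t,s)\nabla_z$ commute among themselves (the gradient components commute, and each $\Cv_{ij}$ is a scalar), so substituting $\Zc$ into $q$ is unambiguous.

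Next I would run the induction. For $n = 0$ the claim is vacuous since $\Lc_0 = \mathrm{Id}$. Assume \eqref{eq:un} holds for all indices below $n$. Starting from the first line of \eqref{eq:vn-explicit},
\begin{align}
v_n(t,z)
	&=	\sum_{k=1}^n \int_t^T \dd t_1 \, \Pc_0(t,t_1) \Act_k(t_1) v_{n-k}(t_1,z)
	 =	\sum_{k=1}^n \int_t^T \dd t_1 \, \Pc_0(t,t_1) \Act_k(t_1) \Lc_{n-k}(t_1,T) v_0(t_1,z) .
\end{align}
Now write $v_0(t_1,z) = \Pc_0(t_1,T)\phi(z)$ and expand $\Lc_{n-k}(t_1,T)$ via \eqref{eq:Ln}; each resulting term is a product $\Pc_0(t,t_1)\,\Act_{i_1}(t_1)\,\Gc_{i_2}(t_1,t_2)\cdots\Gc_{i_{k}}(t_1,t_{k})\,\Pc_0(t_1,T)\phi$ — but this is not yet quite the form I want, because in \eqref{eq:vn-explicit} the $\Ac$'s are interleaved with $\Pc_0$'s rather than being $\Gc$'s based at $t_1$. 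So I would instead argue directly from the second (fully expanded) line of \eqref{eq:vn-explicit}: in the string
\begin{align}
\Pc_0(t,t_1)\Ac_{i_1}(t_1)\Pc_0(t_1,t_2)\Ac_{i_2}(t_2)\cdots\Pc_0(t_{k-1},t_k)\Ac_{i_k}(t_k)\Pc_0(t_k,T)\phi ,
\end{align}
repeatedly apply \eqref{eq:key-commute} from right to left. Pulling $\Ac_{i_k}(t_k)$ leftward through $\Pc_0(t_{k-1},t_k)$ turns it into $\Gc_{i_k}(t_{k-1},t_k)$ acting after $\Pc_0(t_{k-1},t_k)$; absorbing $\Pc_0(t_{k-1},t_k)\Pc_0(t_k,T) = \Pc_0(t_{k-1},T)$ by the semigroup property and continuing inward, one moves every $\Ac_{i_j}$ to the far left past all propagators, each time converting it into $\Gc_{i_j}(t, t_j)$ and collapsing propagators by $\Pc_0(t,t_1)\cdots = \Pc_0(t,T)$ via the semigroup law, until the whole string becomes $\Gc_{i_1}(t,t_1)\Gc_{i_2}(t,t_2)\cdots\Gc_{i_k}(t,t_k)\Pc_0(t,T)\phi(z) = \Gc_{i_1}(t,t_1)\cdots\Gc_{i_k}(t,t_k)v_0(t,z)$. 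Summing over $k$ and over $i \in I_{n,k}$ and reinstating the time integrals reproduces exactly \eqref{eq:Ln}, completing the induction.

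The main obstacle — and the step that needs the most care — is the bookkeeping in the iterated application of \eqref{eq:key-commute}: one must check that after pulling $\Ac_{i_j}(t_j)$ all the way to the left, the argument of its polynomial coefficients really is $\Zc(t,t_j) = z + \mv(t,t_j) + \Cv(t,t_j)\nabla_z$ — i.e. the \emph{cumulative} drift/covariance from $t$ to $t_j$, not from some intermediate time — and that the operator $\Gc_{i_j}(t,t_j)$ as defined in \eqref{eq:Gc.def} has its coefficients evaluated at time $t_j$ (through $a_{\alpha,i_j}(t_j,\cdot)$) but its Gaussian shift indexed by the pair $(t,t_j)$. This follows because each intermediate propagator $\Pc_0(t_{m-1},t_m)$, when a coefficient is pulled through it, contributes an \emph{additive} increment $\mv(t_{m-1},t_m)$ to the shift and $\Cv(t_{m-1},t_m)$ to the $\nabla_z$-coefficient, and these increments telescope via $\mv(t,t_1)+\mv(t_1,t_2)+\cdots = \mv(t,t_j)$ and likewise for $\Cv$ (immediate from the definitions \eqref{eq:m-and-C} as integrals). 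One also uses that the $\Gc$'s need not commute with each other, so the order of the factors in \eqref{eq:Ln} must be preserved — which it is, since the leftward-pulling procedure keeps $i_1,\dots,i_k$ in their original order. Convergence/regularity issues (interchanging the $\eps$-series with the PDE, differentiating under the Gaussian expectation) are handled at the formal level consistent with the rest of the section, so no additional analytic hypotheses are needed here.
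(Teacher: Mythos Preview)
Your proof is correct and follows essentially the same approach as the paper: both reduce the Duhamel expansion \eqref{eq:vn-explicit} via the commutation identity $\Pc_0(t,s)\Ac_i(s)=\Gc_i(t,s)\Pc_0(t,s)$ (your \eqref{eq:key-commute}, which the paper simply cites from \cite{lorig-pagliarani-pascucci-2}) together with the semigroup property of $\Pc_0$. One minor simplification worth noting: the paper applies the identity left-to-right --- commute $\Ac_{i_1}$ past $\Pc_0(t,t_1)$, collapse $\Pc_0(t,t_1)\Pc_0(t_1,t_2)=\Pc_0(t,t_2)$, then commute $\Ac_{i_2}$, and so on --- which produces each $\Gc_{i_j}(t,t_j)$ directly in a single step and sidesteps both the telescoping of $\mv$ and $\Cv$ and the (ultimately unused) induction in your write-up.
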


\begin{proof}
The proof, which is given in \cite[Theorem 2.6]{lorig-pagliarani-pascucci-2}, relies on the fact that, for any $0 \leq t \leq t_k < \infty$ the operator $\Gc_i$ in \eqref{eq:Gc.def} satisfies
\begin{align}
\Pc_0(t,t_k) \Ac_{i}(t_k)
    &=  \Gc_i(t,t_k) \Pc_0(t,t_k) . \label{eq:PA=GP}
\end{align}
Using \eqref{eq:PA=GP}, as well as the semigroup property ${\Pc_0}(t_1,t_2) {\Pc_0}(t_2,t_3) = {\Pc_0}(t_1,{t_3})$, we have that
\begin{align}
&\Pc_0(t,t_1) \Ac_{i_1}(t_1) \Pc_0(t_1,t_2) \Ac_{i_2}(t_2) \cdots \Pc_0(t_{k-1},t_k) \Ac_{i_k}(t_k) \Pc_0(t_k,T) \phi(z) \\
	&=	\Gc_{i_1}(t,t_1) \Gc_{i_2}(t,t_2) \cdots \Gc_{i_k}(t,t_k)
       \Pc_0(t,t_1) \Pc_0(t_1,t_2) \cdots \Pc_0(t_{k-1},t_k) \Pc_0(t_k,T) \phi \\
	&=	\Gc_{i_1}(t,t_1) \Gc_{i_2}(t,t_2) \cdots \Gc_{i_k}(t,t_k) \Pc_0(t,T) \phi \\
	&=	\Gc_{i_1}(t,t_1) \Gc_{i_2}(t,t_2) \cdots \Gc_{i_k}(t,t_k) v_0(t,\,\cdot\,) , \label{eq:result}
\end{align}
where, in the last equality we have used $\Pc_0(t,T) \varphi = v_0(t,\,\cdot\,)$.  Inserting \eqref{eq:result} into \eqref{eq:vn-explicit} yields \eqref{eq:un}.
\end{proof}

\noindent
Having obtained explicit expressions for the functions $(v_n)$, we define $\vb$, the \textit{$n$th order approximation of $v$}, as follows
\begin{align}
\vb_n
	:= \sum_{k=0}^n {v_k} . \label{eq:def-vbar}
\end{align}
Note that $\vb_n$ depends on the choice of $\zb$.  In general, if one is interested in the value of $v(t,z)$ a good choice for $\zb$ is $\zb(t) = z$.  Indeed, when one chooses $\zb(t) = z$, we have from \cite[Theorem 3.10]{lorig-pagliarani-pascucci-4} that
\begin{align}
|v(t,z) - \vb_n(t,z)|
	&=	\Oc( (T-t)^{(n+k+2)/2} ) &
	&\text{as $T-t \to 0$} , \label{eq:accuracy}
\end{align}
when the terminal data $\phi$ is a bounded function with globally Lipschitz continuous derivatives of order less than or equal to $k$.

%
%

\section{Implied volatility asymptotics}
\label{sec:imp-vol}
The goal of this section is to find an explicit approximation for the implied volatility corresponding to the $T$-forward Call price $v(t,x,\yt;T,\Tb,k)$ where we have included now the dependence on the $\log$ strike $k$.  For brevity, in what follows, we will omit the dependence on   $(t,x,\yt;T,\Tb,k)$.  
\\[0.5em]
To begin, we remind the reader that, in the Black-Scholes setting, the $T$-forward price of a risky asset $S$ has dynamics of the form
\begin{align}
\dd \Big( \frac{S_t}{B_t^T} \Big)
	&=	\Sig \Big( \frac{S_t}{B_t^T} \Big) \dd \Wt_t ,
\end{align}
where $\Sig > 0$ is the Black-Scholes volatility and $\Wt$ is a one-dimensional Brownian motion under $\Pbt$.
Given that $\log (S_t/B_T^T) = x$, the $T$-forward \textit{Black-Scholes {Call} price} with volatility $\Sig > 0$ is given by
\begin{align}
v^\BS(\Sig)
	&:=	\ee^x \Phi(d_+) - \ee^k \Phi(d_-) , &
d_\pm
   &= \frac{1}{\Sig \sqrt{T-t}} \left( x - k \pm \frac{\Sig^2 (T-t)}{2}  \right) , &
\Phi(d)
	&=	\int_{-\infty}^d \dd x \, \frac{1}{\sqrt{2\pi}} \ee^{-x^2/2} .
\end{align}
From this, one defines the \textit{implied volatility} corresponding to the $T$-forward {Call} price $v$ as the unique positive solution $\Sig$ to
\begin{align}
v^\BS(\Sig)
	&=	v . \label{eq:IV-def}
\end{align}
As in the previous section, we will seek an approximation of the implied volatility $\Sig^\eps$ corresponding to $v^\eps$ by expanding $\Sig^\eps$ in power of $\eps$.  Our approximation {of} $\Sig$ will then be obtained by setting $\eps = 1$.  We have
\begin{align}
\Sig^\eps
	&=	\Sig_0 + \del \Sig^\eps , &
\del \Sig^\eps
	&=	\sum_{n=1}^\infty \eps^n \Sig_n . \label{eq:I-expand}
\end{align}
Next, expanding $v^\BS(\Sig^\eps)$ in powers of $\eps$ we obtain
\begin{align}
v^\BS(\Sig^\eps)
	&=	v^\BS(\Sig_0 + \del \Sig^\eps) \\
	&=		\sum_{k=0}^\infty \frac{1}{k!}(\del \Sig^\eps \d_\Sig )^k v^\BS(\Sig_0) \\
	&=		v^\BS(\Sig_0) + 
						\sum_{k=1}^\infty \frac{1}{k!}  
						\sum_{n=1}^\infty \eps^n \sum_{I_{n,k}} \Big( \prod_{j=1}^k \Sig_{i_j} \Big) \d_\Sig^k v^\BS(\Sig_0) \\
	&=		v^\BS(\Sig_0) + 
						\sum_{n=1}^\infty \eps^n  \sum_{k=1}^\infty \frac{1}{k!}  
						\sum_{ I_{n,k}} \Big( \prod_{j=1}^k \Sig_{i_j} \Big) \d_\Sig^k v^\BS(\Sig_0) \\
	&=		v^\BS(\Sig_0) + 
						\sum_{n=1}^\infty \eps^n \bigg( \Sig_n \d_\Sig + \sum_{k=2}^\infty \frac{1}{k!}  
						\sum_{ I_{n,k}} \Big( \prod_{j=1}^k \Sig_{i_j} \Big) \d_\Sig^k  \bigg) v^\BS(\Sig_0) ,
\end{align}
where $I_{n,k}$ is given by \eqref{eq:Ink}.  Inserting the expansions for $v^\eps$ and $v^\BS(\Sig^\eps)$  into $v^\eps = v^\BS(\Sig^\eps)$ and collecting terms of like order in $\eps$ we obtain
\begin{align}
&\Oc(\eps^0)&
v_0
	&=	v^\BS(\Sig_0) , \label{eq:v0=expression} \\
&\Oc(\eps^n)&
v_n
	&=	\bigg( \Sig_n \d_\Sig + \sum_{k=2}^\infty \frac{1}{k!}  \sum_{ I_{n,k}} \Big( \prod_{j=1}^k \Sig_{i_j} \Big) \d_\Sig^k  \bigg) v^\BS(\Sig_0) . \label{eq:vn=expression}
\end{align}
Now, from \eqref{eq:v0-explicit} we have
\begin{align}
v_0
	&=	v^\BS \left( \sqrt{ \Cv_{1,1}(t,T)/(T-t) } \right) ,
\end{align}
where $\Cv$ is defined in \eqref{eq:m-and-C}.  Thus, it follows from \eqref{eq:v0=expression} that
\begin{align}
\Sig_0
	&=	\sqrt{ \Cv_{1,1}(t,T)/(T-t) } .\label{eq:sig-0}
\end{align}
Having identified $\Sig_0$, we can use \eqref{eq:vn=expression} to obtain $\Sig_n$ recursively for every $n \geq 1$.  We have
\begin{align}
\Sig_n
	&=	\frac{1}{\d_\Sig v^\BS(\Sig_0)} \bigg( v_n - \sum_{k=2}^\infty \frac{1}{k!}  \sum_{ I_{n,k}} \Big( \prod_{j=1}^k \Sig_{i_j} \Big) \d_\Sig^k v^\BS(\Sig_0) \bigg) . \label{eq:sig-n}
\end{align}
Using the expression given in \eqref{eq:un} for $v_n$, one can show that $\Sig_n$ is an $n$th order polynomial in $\log$-moneyness $k-x$ with coefficients that depend on $(t,T)$; see \cite[Section 3]{lorig-pagliarani-pascucci-2} for details.  We provide explicit expressions for $\Sig_0$, $\Sig_1$, and $\Sig_2$ for the cases $d=\{1,2\}$ in Appendix \ref{sec:explicit-expressions}.
\\[0.5em]
Having obtained expressions for $(\Sig_n)$, we define $\Sigb$, the \textit{$n$th order approximation of $\Sig$}, as follows
\begin{align}
\Sigb_n
	&:= \sum_{k=0}^n \Sig_k . \label{eq:def-sigbar}
\end{align}
Note that $\Sigb_n$ depends on the choice of $\zb$.  In general, the best choice for $\zb$ is $\zb(t) = (x,\yt)$. In this case, we have under mild conditions on the generator $\Act$ that
\begin{align}
|\Sig(t,x,\yt;T,\Tb,k)-\Sigb_n(t,x,\yt;T,\Tb,k)| 
	&= \Oc((T-t)^{(n+1)/2}), &
		&\text{as}&
|k-x| &= \Oc(  \sqrt{T-t}).
\end{align}
by \cite[Theorem 5.1]{pagliarani2017exact}.

%
%

\section{Examples}
\label{sec:examples}
In this section we use the results from Section \ref{sec:imp-vol} to compute approximate implied volatilities for $T$-forward Call prices written on $B^\Tb$ for the following four affine short-rate models:
\begin{itemize*}
\item Section \ref{sec:vasicek}: Vasicek model,
\item Section \ref{sec:cir}: Cox-{Ingersoll}-Ross model,
\item Section \ref{sec:cir2}: Two-factor Cox-{Ingersoll}-Ross model,
\item Section \ref{sec:fong}: Fong-Vasicek model.
\end{itemize*}
Note that, given $(X_t,\Yt_t) = (x,\yt)$, exact $T$-forward Call prices can be computed using
\begin{align}
v(t,x,\yt;T,\Tb)
	&=	\frac{u(t,y;T,\Tb)}{\Gam(t,y;T,0)} , &
y_1
	&=	\eta(t,x,\yt;T,\Tb) , \label{eq:v-exact}
\end{align}
where $\Gam$, $u$ and $\eta$ are given in \eqref{eq:Gamma-explicit}, \eqref{eq:V-explicit}-\eqref{eq:call-ft} and \eqref{eq:eta-def}, respectively.  The corresponding ``exact'' implied volatilities can be obtained by inserting \eqref{eq:v-exact} into \eqref{eq:IV-def} and solving for $\Sig$ numerically.  We will use this in what follows below in order to gauge the numerical accuracy of our implied volatility approximation $\Sigb_n$.

\subsection{Vasicek}
\label{sec:vasicek}
In the short-rate model developed in \cite{vasicek1977equilibrium}, the dynamics of $R=r(Y)$ are given by
\begin{align}
\dd Y_t
	&=	\kappa ( \theta - Y_t) \dd t + \del \dd W_t , &
R_t
	&=	Y_t . \label{eq:vasicek-model}
\end{align}
Comparing \eqref{eq:vasicek-model} with \eqref{eq:R_t} and \eqref{eq:dYi_t}, we see that the functions $r$, $\mu$, and $\sig$ are given by
\begin{align}
r(y)
	&=  y , &
\mu(t,y)
	&=	\kappa ( \theta - y) , &
\sig(t,y)
	&=	\del , \label{eq:r-mu-sigma-vasicek}
\end{align}
and comparing \eqref{eq:r-mu-sigma-vasicek} with \eqref{eq:r-mu-sigma} we identify
\begin{align}
q
	&=	0 , &
\psi
	&=	1 , &
b(t)
	&=	\kappa \theta , &
\beta(t)
	&=	- \kappa , &
\ell(t)
	&=	\del^2, &
\lam(t)
	&=	0 , 
\end{align}
where we have dropped the subscripts from $\psi$, $\beta$ and $\lam$ as $d=1$.
With the above parameters, the solution $G$ of ODE \eqref{eq:G-ode} is
\begin{align}
G(t;T,\nu)
	&=	- \ee^{-\kappa (T-t)}\nu + \frac{1-\ee^{-\kappa (T-t)}}{\kappa} . \label{eq:G-vasicek}
\end{align}
While the solution $F$ of ODE \eqref{eq:F-ode} is needed to compute exact Call option prices, we shall see that it is not needed to compute implied volatilities in the Vasicek setting.  As such, we do not provide a formula for $F$ here.  From \eqref{eq:mu-tilde-sigma-tilde}, \eqref{eq:eta-def}, and \eqref{eq:r-mu-sigma-vasicek}, we have
\begin{align}
\sigt(t,x;T,\Tb)
	&:= \del . \label{eq:sigma-tilde-vasicek}
\end{align}
And thus, using \eqref{eq:A-1d}, \eqref{eq:G-vasicek} and \eqref{eq:sigma-tilde-vasicek}, the generator $\Act$ is given by
\begin{align}
\Act(t)	
	&=	c(t,x) (\d_x^2 - \d_x ) , &
c(t,x)
	&=	\frac{1}{2} \del^2 \Big( \frac{1-\ee^{-\kappa (T-t)}}{\kappa} - \frac{1-\ee^{-\kappa (\Tb - t)}}{\kappa} \Big)^2 .
\end{align}
The explicit implied volatility approximation $\Sigb_n$ up to order $n=2$ can now be computed using the formulas in Appendix \ref{sec:explicit-expressions}.
Because the coefficient $c$ does not depend on $x$ in the Vasicek setting, the zeroth order implied volatility approximation is exact
\begin{align}
\Sig = \Sig_0
	&=		\sqrt{ \frac{1}{T-t} \int_t^T \dd s \, \del^2 \Big( \frac{1-\ee^{-\kappa (T-s)}}{\kappa} - \frac{1-\ee^{-\kappa (\Tb - s)}}{\kappa} \Big)^2 } 
	 =		\frac{\delta}{\kappa^{3/2}}\sqrt{\frac{\ee^{2 \kappa  T}-\ee^{2 \kappa  t}}{2 (T - t)}} \left(\ee^{-\kappa T}-\ee^{-\kappa  \Tb}\right) .
\end{align}
From the above, it is easy to identify the following limits
\begin{align}
		\lim_{t \to T} \Sig & = {\frac{\del}{\kappa} \left(1 - \ee^{- \kappa  \left(\Tb - T \right)}\right)} , &
    \lim_{T \to \Tb} \Sig & = 0 , &
    \lim_{\Tb \to \infty} \Sig & = {\frac{\delta}{\kappa^{3/2}}\sqrt{\frac{1-\ee^{-2 \kappa  (T-t)}}{2(T-t)}}}, &
    \lim_{t \to T,\Tb \to \infty}\Sig & = \frac{\delta}{\kappa}.
    \label{eq:sig-vasicek-analysis}
\end{align}
In Figure \ref{fig:plot_sigma_vasicek} we plot $\Sig$ as a function of $t$ for various valued of $\Tb$ with $T$ fixed.  

\subsection{Cox-Ingersoll-Ross}
\label{sec:cir}
In the Cox-Ingersoll-Ross (CIR) short-rate model developed in \cite{cox2005theory}, the dynamics of $R=r(Y)$ are given by
\begin{align}
\dd Y_t
	&=	\kappa ( \theta - Y_t) \dd t + \del \sqrt{Y_t}\dd W_t , &
R_t
	&=	Y_t . \label{eq:CIR-model}
\end{align}
Comparing \eqref{eq:CIR-model} with \eqref{eq:R_t} and \eqref{eq:dYi_t}, we see that the functions $r$, $\mu$, and $\sig$ are given by
\begin{align}
r(y)
	&=  y , &
\mu(t,y)
	&=	\kappa ( \theta - y) , &
\sig(t,y)
	&=	\del \sqrt{y} , \label{eq:r-mu-sigma-CIR}
\end{align}
and comparing \eqref{eq:r-mu-sigma-CIR} with \eqref{eq:r-mu-sigma} we identify
\begin{align}
q
	&=	0 , &
\psi
	&=	1 , &
b(t)
	&=	\kappa \theta , &
\beta(t)
	&=	- \kappa , &
\ell(t)
	&=	0, &
\lam(t)
	&=	\del^2 , 
\end{align}
where we have dropped the subscripts from $\psi$, $\beta$ and $\lam$ as $d=1$.
With the above parameters, the solutions $F$ and $G$ of coupled ODEs \eqref{eq:F-ode} and \eqref{eq:G-ode} are
\begin{align}
F(t;T,\nu)
	&=	-\frac{2\kappa \theta}{\del^2}\log \Big(\frac{2\Lam\exp\big((\Lam+\kappa)\tau/2\big)}{-\del^2\nu\big(\exp(\Lam \tau)-1)+\Lam(\exp(\Lam \tau)+1)+\kappa(\exp(\Lam \tau)-1)} \Big) , & 
\tau 
	&:= T-t , \label{eq:F-CIR}\\
G(t;T,\nu)
	&=\frac{2(\exp(\Lam \tau)-1)-\big(\Lam(\exp(\Lam \tau)+1)-\kappa(\exp(\Lam \tau)-1)\big)\nu}{-\del^2\nu\big(\exp(\Lam \tau)-1)+\Lam(\exp(\Lam \tau)+1)+\kappa(\exp(\Lam \tau)-1)}, &
\Lam 
	&:= \sqrt{\kappa^2+2\del^2}. \label{eq:G-CIR}
\end{align}
From \eqref{eq:mu-tilde-sigma-tilde}, \eqref{eq:eta-def}, and \eqref{eq:r-mu-sigma-CIR}, we have
\begin{align}
\sigt(t,x;T,\Tb)	& = \delta\sqrt{\frac{F(t;T,0) - F(t;\Tb,0) - x }{G(t;\Tb,0)-G(t;T,0)}} ,\label{eq:sigma-tilde-CIR}
\end{align}
And thus, using \eqref{eq:A-1d} and \eqref{eq:sigma-tilde-CIR}, the generator $\Act$ is given by
\begin{align}
\Act(t)	
	&=	c(t,x) (\d_x^2 - \d_x ) , &
c(t,x)
	&=	\frac{\del^2}{2}\Big(F(t;T,0) - F(t;\Tb,0)-x\Big)\Big(G(t;\Tb,0) - G(t;T,0)\Big).
\end{align}
Introducing the short-hand notation $c_j(t,x) := \d_x^j c(t,x) / j!$, we have
\begin{align}
    c_{0}(t,x) & = \frac{\del^2}{2}\Big(F(t;T,0) - F(t;\Tb,0)-x\Big)\Big(G(t;\Tb,0) - G(t;T,0)\Big) , \\
    c_{1}(t,x) & \equiv c_1(t) = -\frac{\del^2}{2}\Big(G(t;\Tb,0) - G(t;T,0)\Big) , \\
		c_n(t,x) & = 0 , &
		n		& \geq 2 .
\end{align}
The explicit implied volatility approximation $\Sigb_n$ can now be computed up to order $n=2$ using the formulas in Appendix \ref{sec:explicit-expressions}.  We have
%
%
%
\begin{align}
    \Sig_0 & = \sqrt{\frac{2}{\tau}\int_{t}^T \dd s \, c_0(s,x)} ,
    \\
    \Sig_1 & ={\frac{2(k-x)}{\Sig^3_0\tau^2}\int_{t}^T\dd s \, c_1(s,x)\int_{t}^s \dd q \, c_0(q,x)} , \label{eq:CIR-IV}
    \\
        \Sig_2& = {\frac{6(k-x)^2}{{\Sig_0^7 \tau^4}}\bigg( -2 \Big(\int_t^T  \dd s \, c_1(s) \int_{t}^s  \dd q \, c_0(q,x) \Big){}^2+ \Sig_0^2 \tau \int_t^T  \dd s_1 \, \int_{s_1}^T  \dd s_2 \, c_1(s_1) c_1(s_2) \int_{t}^{s_1}\dd q \,  c_0(q,x) \bigg)} 
    \\
     & \quad {+\frac{(\Sig_0^2 \tau+12)}{{2 \Sig_0^5 \tau^3}}\bigg(\Big(\int_t^T\dd s \, c_1(s) \int_{t}^s  \dd q \, c_0(q,x) \Big){}^2 -\Sig_0^2 \tau \int_t^T  \dd s_1 \, \int_{s_1}^T  \dd  s_2 \, c_1(s_1) c_1(s_2) \int_{t}^{s_1}\dd q \,  c_0(q,x)\bigg)} . 
\end{align}
In Figure \ref{fig:CIR_IV} we plot our explicit approximation of implied volatility $\bar{\Sig}_n$ up to order $n=2$ as a function of $\log$-moneyness $k-x$ with $t=0$ and $\Tb = 2$ fixed and with option maturities ranging over $T = \{\frac{1}{12},\frac{1}{4},\frac{1}{2},\frac{3}{4}\}$.  For comparison, we also plot the exact implied volatility $\Sig$.
We observe that the second order approximation $\Sigb_2$ accurately matches the level, slope, and convexity of the exact implied volatility $\Sig$ near-the-money for all four option maturity dates. In Figure \ref{fig:CIR-Err} we plot the absolute value of the relative error of our second order approximation $|\Sigb_2-\Sig|/\Sig$ as a function of $\log$-moneyness $k-x$ and option maturity $T$. We observe that the error decreases as we approach the origin in both directions of $k-x$ and $T$ and the best approximation region is within $0.2 \%$ of the exact implied volatility.



\subsection{Two-factor Cox-{Ingersoll}-Ross}
\label{sec:cir2}

In the Two-factor Cox-Ingersoll-Ross ({2-D CIR}) short-rate model developed in \cite{cox2005theory}, the dynamics of $R=r(Y)$ are given by
\begin{align}
\dd Y^{(1)}_t
	&=	\kappa_1 ( \theta_1 - Y^{(1)}_t) \dd t + \del_1 \sqrt{Y^{(1)}_t}\dd W^{(1)}_t , &
	\\
	\dd Y^{(2)}_t
	&=	\kappa_2 ( \theta_2 - Y^{(2)}_t) \dd t + \del_2 \sqrt{Y^{(2)}_t}\dd W^{(2)}_t , &
	\\
R_t
	&=	Y^{(1)}_t + Y^{(2)}_t . \label{eq:2D-CIR-model}
\end{align}
Comparing \eqref{eq:2D-CIR-model} with \eqref{eq:R_t} and \eqref{eq:dYi_t}, we see that the functions $r$, $\mu$, and $\sig$ are given by
\begin{align}
r(y_1,y_2)
	&=  y_1+y_2 , &
\mu(t,y_1,y_2)
	&=	\begin{pmatrix}\kappa_1 ( \theta_1 - y_1)
	\\
	\kappa_2 ( \theta_2 - y_2) \end{pmatrix} , &
\sig(t,y_1,y_2)
	&=	\begin{pmatrix}\del_1 \sqrt{y_1} & 0 
	\\
	0 & \del_2 \sqrt{y_2}  \end{pmatrix} , \label{eq:r-mu-sigma-2D-CIR}
\end{align}
and comparing \eqref{eq:r-mu-sigma-2D-CIR} with \eqref{eq:r-mu-sigma} we identify
\begin{align}
q
	&=	0 , &
\psi
	&=	\begin{pmatrix}1
	\\
	1 \end{pmatrix} , &
b(t)
	&=	\begin{pmatrix}\kappa_1 \theta_1
	\\
	\kappa_2 \theta_2  \end{pmatrix} , &
\beta_1(t)
	&=	- 	\begin{pmatrix}\kappa_1 
	\\
	0 \end{pmatrix} , &
	\\
	\beta_2(t)
	&=	- 	\begin{pmatrix} 0
	\\
	\kappa_2 \end{pmatrix} , &
\ell(t)
	&=	0, &
\lam_1(t)
	&=	\begin{pmatrix}\del^2_1  & 0 
	\\
	0 & 0  \end{pmatrix} ,  &
\lam_2(t)
	&=	\begin{pmatrix} 0 & 0 
	\\
	0 & \del^2_2  \end{pmatrix} . 
\end{align}
With the above parameters, the solutions $F$ and $G = (G_1,G_2)$ of coupled ODEs \eqref{eq:F-ode} and \eqref{eq:G-ode} are
\begin{align}
F(t;T,\nu)
	&=-\sum_{i=1}^2 \frac{2\kappa_i \theta_i}{\del_i^2}\log \Big(\frac{2\Lam_i\exp\big((\Lam_i+\kappa_i)\tau/2\big)}{-\del_i^2\nu_i\big(\exp(\Lam_i \tau)-1)+\Lam_i(\exp(\Lam_i \tau)+1)+\kappa_i(\exp(\Lam_i \tau)-1)} \Big), \label{eq:F-2d-CIR}\\
G_i(t;T,\nu)
    &= \frac{2(\exp(\Lam_i \tau)-1)-\big(\Lam_i(\exp(\Lam_i \tau)+1)-\kappa_i(\exp(\Lam_i \tau)-1)\big)\nu_i}{-\del_i^2\nu_i\big(\exp(\Lam_i \tau)-1)+\Lam_i(\exp(\Lam_i \tau)+1)+\kappa_i(\exp(\Lam_i \tau)-1)},  & i =\{1,2\} ,
    \label{eq:G-2d-CIR}
    \\
    \Lam_i & := \sqrt{\kappa_i^2+2\del_i^2}.
\end{align}
From \eqref{eq:mu-tilde-sigma-tilde}, \eqref{eq:eta-def}, and \eqref{eq:r-mu-sigma-2D-CIR}, we have
\begin{align}
    \eta(t,x,y_2;T,\Tb) & = 	\frac{F(t;T,0) - F(t;\Tb,0) - x  + \Big(G_2(t;T,0) - G_2(t;\Tb,0) \Big) y_2}{G_1(t;\Tb,0)-G_1(t;T,0)},
    \\
\sigt(t,x,y_2;T,\Tb)	& = \begin{pmatrix} \del_1\sqrt{\eta(t,x,y_2;T,\Tb)} &  0 \\ 0 & \del_2 \sqrt{y_2} \end{pmatrix}
,\label{eq:sigma-tilde-2D-CIR}
\end{align}
and thus, using \eqref{eq:A-2d} and \eqref{eq:sigma-tilde-2D-CIR}, the generator $\Act$ is given by
\begin{align}
\Act(t)	
	&=	c(t,x,y_2) (\d_x^2 - \d_x ) + f(t,x,y_2) \d_{y_2} + g(t,x,y_2) \d_{y_2}^2  + h(t,x,y_2) \d_x \d_{y_2},
\end{align}
where the functions $c$, $f$, $g$ and $h$ are given by 
\begin{align}
c(t,x,y_2)
	&=	 \tfrac{1}{2}\del^2_1 \bigg(F(t;T,0) - F(t;\Tb,0) - x  + \Big(G_2(t;T,0) - G_2(t;\Tb,0) \Big) y_2\bigg)\Big(G_1(t;\Tb,0)-G_1(t;T,0)\Big)
	\\
	& \quad + \tfrac{1}{2} \del^2_2  \Big(G_2(t;T,0) - G_2(t;\Tb,0)\Big)^2 y_2 ,
	\\
f(t,x,y_2)
	&=	\kappa_2(\theta_2-y_2) -\del^2_2 y_2 G_2(t;T,0), \\ 
g(t,x,y_2)
	&=	\tfrac{1}{2}\del^2_2 y_2, \\
h(t,x,y_2)
	&=\del^2_2 y_2 \Big( G_2(t;T,0) - G_2(t;\Tb,0) \Big).
\end{align}
Introducing the notation $\chi_{i,j}(t,x,y_2) := \d_x^i \d_{y_2}^j \chi(t,x,y_2)/ (i! j!)$ where $ \chi \in \{c,f,g,h\}$, we compute
\begin{align}
    \chi_{0,0}(t,x,y_2) & = \chi(t,x,y_2),
    \\
	c_{1,0}(t,x,y_2) & = -\tfrac{1}{2}\del^2_1 \Big(G_1(t;\Tb,0)-G_1(t;T,0) \Big),
	\\
	c_{0,1}(t,x,y_2) & = \tfrac{1}{2} \del^2_1 \Big( G_2(t;T,0) - G_2(t;\Tb,0) \Big) \Big( G_1(t;\Tb,0)-G_1(t;T,0) \Big) \\ & \quad
		+ \tfrac{1}{2} \del^2_2  \Big( G_2(t;T,0) - G_2(t;\Tb,0) \Big)^2 ,
		\\
		f_{0,1}(t,x,y_2) & =  -(\kappa_2 +\del^2_2)G_2(t;T,0),
		\\
	g_{0,1}(t,x,y_2)
	&=	\tfrac{1}{2}\del^2_2,
	\\
	h_{0,1}(t,x,y_2)
	&= \del^2_2 \Big( G_2(t;T,0) - G_2(t;\Tb,0) \Big) ,
\end{align}
and $\chi_{i,j}(t,x,y_2)  = 0$, for any term not given above. The explicit implied volatility approximation $\Sigb_n$ can now be computed up to order $n=2$ using the formulas in Appendix \ref{sec:explicit-expressions}. We have
\begin{align}
    \Sig_0 & = \sqrt{\frac{2}{\tau}\int_{t}^T \dd s \, c_{0,0}(s,x,y_2)},
\\
    \Sig_1 & = \frac{(k-x)}{\tau^2\Sig^3_0}\Big(2\int_{t}^T\dd s \, c_{1,0}(s,x,y_2)\int_{t}^s \dd q \, c_{0,0}(q,x,y_2)+ \int_{t}^T \dd s \, c_{0,1}(s,x,y_2)\int_{t}^s \dd q \, h_{0,0}(q,x,y_2)\Big)
    \\
     &\quad + \frac{1}{2\tau\Sig_0}\int_{t}^T \dd s \, c_{0,1}(s,x,y_2)\Big(2\int_{t}^s \dd q \, f_{0,0}(q,x,y_2)+ \int_{t}^s \dd q \, h_{0,0}(q,x,y_2)\Big) , \label{eq:2D-CIR-IV}
\end{align}
where we have omitted the 2nd order term $\Sig_2$ due to its considerable length.
\\[0.5em]
In Figure \ref{fig:2D-CIR-IV} we plot our explicit approximation of implied volatility $\bar{\Sig}_n$ up to order $n=2$ as a function of $\log$-moneyness $k-x$ with $t=0$ and $\Tb = 2$ fixed and with option maturities ranging over $T = \{\frac{1}{12},\frac{1}{4},\frac{1}{2},\frac{3}{4}\}$.  For comparison, we also plot the the exact implied volatility $\Sig$.
As is the case with the (1-D) CIR model, we observe in the 2-D CIR model that the second order approximation $\Sigb_2$ accurately matches the level, slope, and convexity of the exact implied volatility $\Sig$ near-the-money for all four option maturity dates. In Figure \ref{fig:2D-CIR-Err} we plot the absolute value of the relative error of our second order approximation $|\Sigb_2-\Sig|/\Sig$ as a function of $\log$-moneyness $k-x$ and option maturity $T$. We observe that the error decreases as we approach the origin in both directions of $k-x$ and $T$ and the best approximation region is within $0.1 \%$ of the exact implied volatility.

\subsection{Fong-Vasicek}
\label{sec:fong}
In the Fong-Vasicek short-rate model developed in \cite{fong1991fixed}, the dynamics of $R=r(Y)$ are given by
\begin{align}
\dd Y^{(1)}_t
	&=	\kappa_1 ( \theta_1 - Y^{(1)}_t) \dd t + \sqrt{Y^{(2)}_t}\dd W^{(1)}_t , &
	\\
	\dd Y^{(2)}_t
	&=	\kappa_2 ( \theta_2 - Y^{(2)}_t) \dd t + \del_2 \rho \sqrt{Y^{(2)}_t} \dd W^{(1)}_t + \del_2 \rhob \sqrt{Y^{(2)}_t} \dd W^{(2)}_t , & \quad & \rhob = \sqrt{1-\rho^2}
	\\
R_t
	&=	Y^{(1)}_t. \label{eq:FV-model}
\end{align}
Comparing \eqref{eq:FV-model} with \eqref{eq:R_t} and \eqref{eq:dYi_t}, we see that the functions $r$, $\mu$, and $\sig$ are given by
\begin{align}
r(y_1,y_2)
	&=  y_1 , &
\mu(t,y_1,y_2)
	&=	\begin{pmatrix}\kappa_1 ( \theta_1 - y_1)
	\\
	\kappa_2 ( \theta_2 - y_2) \end{pmatrix} , &
\sig(t,y_1,y_2)
	&=	\begin{pmatrix} \sqrt{y_2} & 0 
	\\
	\del_2 \rho \sqrt{y_2} & \del_2 \rhob \sqrt{y_2}  \end{pmatrix} , \label{eq:r-mu-sigma-FV}
\end{align}
and comparing \eqref{eq:r-mu-sigma-FV} with \eqref{eq:r-mu-sigma} we identify
\begin{align}
q
	&=	0 , &
\psi
	&=	\begin{pmatrix}1
	\\
	0 \end{pmatrix} , &
b(t)
	&=	\begin{pmatrix}\kappa_1 \theta_1
	\\
	\kappa_2 \theta_2  \end{pmatrix} , &
\beta_1(t)
	&=	- 	\begin{pmatrix}\kappa_1 
	\\
	0 \end{pmatrix} , &
	\\
	\beta_2(t)
	&=	- 	\begin{pmatrix} 0
	\\
	\kappa_2 \end{pmatrix} , &
\ell(t)
	&=	0, &
\lam_1(t)
	&=	\begin{pmatrix}0  & 0 
	\\
	0 & 0  \end{pmatrix} ,  &
\lam_2(t)
	&=	\begin{pmatrix} 1 & \delta_2\rho
	\\
	\delta_2\rho & \del^2_2  \end{pmatrix} . 
\end{align}
With the above parameters, we find using \eqref{eq:F-ode} and \eqref{eq:G-ode} that the ODEs satisfied by $F$ and $G = (G_1, G_2)$ are 
\begin{align}
\d_t F(t;T,\nu)
	&=	-\kappa_1\theta_1 G_1(t;T,\nu)-\kappa_2\theta_2 G_2(t;T,\nu) , &
F(T;T,\nu)
	&=	0 , \label{eq:F-ode-FV} \\
\d_t G_1(t;T,\nu)
	&=	\kappa_1 G_1(t;T,\nu)-1 , &
G_1(T;T,\nu)
	&=	- \nu_1 , \label{eq:G1-ode-FV} \\
	\d_t G_2(t;T,\nu)
	&=	\tfrac{1}{2}\del^2_2 G^2_2(t;T,\nu) + \Big(\del_2\rho G_1(t;T,\nu) + \kappa_2 \Big) G_2(t;T,\nu) 
	\\
	& \quad + \tfrac{1}{2}G^2_1(t;T,\nu) , &
G_2(T;T,\nu)
	&=	- \nu_2 . \label{eq:G2-ode-FV}
\end{align}
Although one can obtain explicit expressions for $F(t;T,\nu)$, $G_1(t;T,\nu)$ and $G_2(t;T,\nu)$, these expressions are given in terms of \textit{confluent hypergeometric fuctions} (CHFs).  As numerical evaluation of CHFs is time-consuming, computing explicit Call prices using \eqref{eq:v-exact} is not practical because it involves integrals with respect to $\nu$.
By contrast, in order to compute our explicit approximation of implied volatility $\Sigb_n$, we need only expressions for $F(t;T,0)$, $G_1(t;T,0)$ and $G_2(t;T,0)$, which we provide in Appendix \ref{sec:F-and-G}.
\\[0.5em]
From \eqref{eq:mu-tilde-sigma-tilde}, \eqref{eq:eta-def}, and \eqref{eq:r-mu-sigma-FV}, we have
\begin{align}
    \eta(t,x,y_2;T,\Tb) & = 	\frac{F(t;T,0) - F(t;\Tb,0) - x  + \Big(G_2(t;T,0) - G_2(t;\Tb,0)\Big) y_2}{G_1(t;\Tb,0)-G_1(t;T,0)},
    \\
\sigt(t,x,y_2;T,\Tb)& = \begin{pmatrix} \sqrt{y_2} & 0 
	\\
	\del_2 \rho \sqrt{y_2} & \del_2 \rhob \sqrt{y_2}  \end{pmatrix}
,\label{eq:sigma-tilde-FV}
\end{align}
And thus, using \eqref{eq:A-2d} and \eqref{eq:sigma-tilde-FV}, the generator $\Act$ is given by
\begin{align}
\Act(t)	
	&=	c(t,x,y_2) (\d_x^2 - \d_x ) + f(t,x,y_2) \d_{y_2} + g(t,x,y_2) \d_{y_2}^2  + h(t,x,y_2) \d_x \d_{y_2},
\end{align}
where the functions $c$, $f$, $g$ and $h$ are given by 
\begin{align}
c(t,x,y_2)
	&=	 \tfrac{1}{2}y_2 \Big(G_1(t;T,0) - G_1(t;\Tb,0)\Big)^2 + \rho\del_2  y_2 \Big(G_1(t;T,0) - G_1(t;\Tb,0)\Big) \Big(G_2(t;T,0) - G_2(t;\Tb,0)\Big)
	\\
	& \quad + \tfrac{1}{2}\del^2_2 y_2 \Big(G_2(t;T,0) - G_2(t;\Tb,0)\Big)^2,
	\\
f(t,x,y_2)
	&=	\kappa_2(\theta_2-y_2) -\del^2_2 y_2 G_2(t;T,0) - \rho \del_2 y_2 G_1(t;T,0),  \\ 
g(t,x,y_2)
	&=	\tfrac{1}{2}\del^2_2 y_2, \\
h(t,x,y_2)
	&= \del^2_2 y_2 \Big( G_2(t;T,0) - G_2(t;\Tb,0)\Big) + \rho \del_2 y_2 \Big(G_1(t;T,0) - G_1(t;\Tb,0)\Big).
\end{align}
Once again using the short-hand notation $\chi_{i,j}(t,x,y_2) := \d_x^i \d_{y_2}^j \chi(t,x,y_2)/ (i! j!)$ where $\chi \in \{c,f,g,h\}$, we compute
\begin{align}
    \chi_{0,0}(t,x,y_2) & = \chi(t,x,y_2),
	\\
	c_{0,1}(t,x,y_2) & =  \tfrac{1}{2} \Big(G_1(t;T,0) - G_1(t;\Tb,0)\Big)^2 + \rho\del_2 \Big(G_1(t;T,0) - G_1(t;\Tb,0)\Big) \Big(G_2(t;T,0) - G_2(t;\Tb,0) \Big)
	\\
	&\quad  + \tfrac{1}{2}\del^2_2 \Big(G_2(t;T,0) - G_2(t;\Tb,0) \Big)^2,  
		\\
		f_{0,1}(t,x,y_2) & =  -\kappa_2 -\del^2_2 G_2(t;T,0) - \rho \del_2 G_1(t;T,0),
	\\
	g_{0,1}(t,x,y_2)
	&=	\tfrac{1}{2}\del^2_2,
	\\
	h_{0,1}(t,x,y_2)
	&= \del^2_2 \Big( G_2(t;T,0) - G_2(t;\Tb,0) \Big) + \rho \del_2 \Big( G_1(t;T,0) - G_1(t;\Tb,0) \Big) ,
\end{align} 
where $\chi_{i,j}(t,x,y_2)  = 0$ for any term not given above. The explicit implied volatility approximation $\Sigb_n$ can now be computed up to order $n=2$ using the formulas in Appendix \ref{sec:explicit-expressions}.  We have
\begin{align}
    \Sig_0 & = \sqrt{\frac{2}{\tau}\int_{t}^T \dd s \, c_{0,0}(s,x,y_2)},
\\
    \Sig_1 & = \frac{k-x}{\tau^2\Sig^3_0}\Big(\int_{t}^T \dd s \, c_{0,1}(s,x,y_2)\int_{t}^s \dd q \, h_{0,0}(q,x,y_2)\Big)
    \\
    & \quad + \frac{1}{2\tau\Sig_0}\int_{t}^T \dd s \, c_{0,1}(s,x,y_2)\Big(2\int_{t}^s \dd q \, f_{0,0}(q,x,y_2)+ \int_{t}^s \dd q \, h_{0,0}(q,x,y_2)\Big). \label{eq:FV-IV}
\end{align}
where we have omitted the second order term $\Sig_2$ due to its considerable length.
\\[0.5em]
In Figure \ref{fig:plot_sigma_FV} we plot our second order approximation of implied volatility $\Sigb_2$ as a function of $\log$-moneyness $k-x$ with the maturity date of the bond fixed at $\Tb = 2$, the maturity date of the option taking the following values $T = \{\frac{1}{12}, \frac{1}{4}, \frac{1}{2}, \frac{3}{4}\}$ and the correlation parameter taking the following values $\rho = \{-0.7,-0.3,0.3,0.7\}$. We can see the convexity near-the-money changes from concave to convex as we increase $\rho$. From the expression of $\Sig_1$ in \eqref{eq:FV-IV} we observe that  the slope of $\Sig_1$ with respect to $k-x$ is controlled by the sign of $c_{0,1}$ and $h_{0,0}$. As $G(t;T,0)$ is an increasing function in $T$, the expression $G_i(t;T,0)-G_i(t;\Tb,0)$ is negative, which means that, fixing all other parameters, $\rho$ controls the sign of $c_{0,1}$ and $h_{0,0}$. As a result, as we change $\rho$ from $-1$ to $1$ the slope of $\Sig_1$ changes accordingly.  A similar analysis can be done on the sign of coefficients of $(k-x)^2$ of $\Sig_2$ to show that $\rho$ controls the convexity of $\Sig_2$ with respect to $k-x$. This is in contrast to the CIR and {2-D CIR} models, where the implied volatility curve near-the-money is concave.

\section{Conclusion} 
\label{sec:conclusion}
In this paper, we have provided an explicit asymptotic approximation for the implied volatility of Call options on bonds assuming the {short-rate} is given by an affine term-structure model.  In future work, we plan to extend our results by providing explicit implied volatility approximations {for other} short-rate derivatives including caps and floors.

%
%

\appendix

\section{Explicit expressions for $\Sig_0$, $\Sig_1$ and $\Sig_2$} 
\label{sec:explicit-expressions}
In this appendix we give the expressions for the implied volatility approximation using \eqref{eq:sig-0} and \eqref{eq:sig-n} explicitly up to second order for $d=\{1,2\}$ in terms of the coefficients $c$, $f$,$g$, and $h$ of $\Act$, given in \eqref{eq:A-2d}, by performing Taylor's expansion of the coefficients around $\zb(t) = (x,\yt)$.  To ease the notation, we define
\begin{align}
 \chi_{i,j}(t) \equiv \chi_{i,j}(t,x,\yt) &   = \frac{\d^i_x\d^j_y\chi(t,x,\yt)}{i!j!}, &  \chi & \in \{c,f,g,h\}. \label{eq:chi-ij}
\end{align}
The zeroth order term $\Sig_0$ is given by
\begin{align}
        \Sig_0 &  = \sqrt{\frac{2}{\tau}\int_{t}^T \dd s \, c_{0,0}(s,x,\yt) } .
\end{align}
Next, let {us} define 
\begin{align}
H_n(\xi) &:= \Big(\frac{-1}{\Sigma_0\sqrt{2\tau}}\Big)^n \mathscr{H}_n(\xi), & 
\xi &:= \frac{x-k-\frac{1}{2}\Sigma^2_0\tau}{\Sig_0\sqrt{2\tau}} , &
\tau &:= T-t,
\end{align}
where $\mathscr{H}_n(\xi)$ is the  $n$th-order \emph{Hermite's polynomial}.
Then the first order term $\Sig_1$ is given by
\begin{align}
\Sig_1 & = \Sig_{1,0}+\Sig_{0,1},  
\end{align}
where $ \Sig_{1,0}$ and $\Sig_{0,1}$ are given by
\begin{align}
    \Sig_{1,0} & = \frac{1}{\tau\Sig_0}\int_{t}^T\dd s \, c_{1,0}(s,x,\yt)\int_{t}^s \dd q \, c_{0,0}(q,x,\yt)  \Big(2H_1(\xi)-1\Big),  \\
    \Sig_{0,1} & = \frac{1}{\tau\Sig_0}\int_{t}^T \dd s \, c_{0,1}(s,x,\yt)\Big(\int_{t}^s \dd q \, f_{0,0}(q,x,\yt)+ \int_{t}^s \dd q \, h_{0,0}(q,x,\yt) H_1(\xi)\Big) .
\end{align}
Note that $\Sig_{0,1} = 0$ when $d=1$ because in this case $f=h=0$.  Lastly, the second order term $\Sig_2$ is given by
\begin{align}
\Sig_{2}
    & = \Sig_{2,0}+\Sig_{1,1}+\Sig_{0,2}, 
\end{align}
where, using the short-hand notation $\xi_{i,j}(t) := \xi_{i,j}(t,x,\yt)$, the terms $\Sig_{2,0}$, $\Sig_{1,1}$, $\Sig_{0,2}$ are given by
\begin{align}
    \Sig_{2,0} & = \frac{1}{\tau\Sig_0}\bigg(\frac{1}{2}\int_{t}^T \dd s \, c_{2,0}(s) \Big((\int_{t}^s \dd q \, c_{0,0}(q) )^2 (4H_2(\xi)-4H_1(\xi)+1) + 2 \int_{t}^s \dd q \, c_{0,0}(q)  \Big)
    \\
    & \quad + \int_{t}^T \dd s_1 \, \int_{s_1}^T \dd s_2 \, c_{1,0}(s_1)c_{1,0}(s_2)\Big(\int_{t}^{s_1} \dd q_1 \, c_{0,0}(q_1) \int_{t}^{s_2} \dd q_2 \, c_{0,0}(q_2) 
    \big(4H_4(\xi)-8H_3(\xi)+5H_2(\xi)-H_1(\xi)\big) 
    \\
    & \quad + \int_{t}^{s_1} \dd q_1 \, c_{0,0}(q_1)  \Big(6H_2(\xi)-6H_1(\xi)+1)\Big)\bigg) -\frac{\Sig^2_{1,0}}{2}\Big(\tau\Sig_0 (H_2(\xi)-H_1(\xi))
    + \frac{1}{\Sig_0} \Big), \\
    \Sig_{1,1} & = \frac{1}{\tau\Sig_0}\Bigg(\frac{1}{2}\int_{t}^T \dd s \, c_{1,1}(s) \bigg(2\int_{t}^{s} \dd q_1 \, c_{0,0}(q_1) \int_{t}^{s} \dd q_2 \, h_{0,0}(q_2) H_2(\xi) 
    \\
    & \quad + \int_{t}^{s} \dd q_1 \, c_{0,0}(q_1)(2\int_{t}^{s} \dd q_2 \, f_{0,0}(q_2)-\int_{t}^{s} \dd q_2 \, h_{0,0}(q_2))H_1(\xi) -\int_{t}^{s} \dd q_1 \, c_{0,0}(q_1)\int_{t}^{s} \dd q_2 \, f_{0,0}(q_2)+\int_{t}^{s} \dd q_1 \, h_{0,0}(q_1)\bigg) 
    \\
    & \quad + \int_{t}^T \dd s_1 \,\int_{s_1}^{T} \dd s_2 \, c_{1,0}(s_1)c_{0,1}(s_2) \bigg( 2\int_{t}^{s_1} \dd q_1 \, c_{0,0}(q_1)\int_{t}^{s_2} \dd q_2 \, h_{0,0}(q_2)H_4(\xi) 
        \\
        & \quad +\int_{t}^{s_1} \dd q_1 \, c_{0,0}(q_1) \Big(2\int_{t}^{s_2} \dd q_2 \, f_{0,0}(q_2) - 3\int_{t}^{s_2} \dd q_2 \, h_{0,0}(q_2)\Big)H_3(\xi) \\
        & \quad +(\int_{t}^{s_1} \dd q \, c_{0,0}(q)(\int_{t}^{s_2} \dd q \, h_{0,0}(q)-3 \int_{t}^{s_2} \dd q \, f_{0,0}(q))+\int_{t}^{s_1} \dd q \, h_{0,0}(q)\Big) H_2(\xi) 
        \\
        & \quad + \Big(\int_{t}^{s_1} \dd q_1 \, c_{0,0}(q_1)\int_{t}^{s_2} \dd q_2 \, f_{0,0}(q_2)-\int_{t}^{s_1} \dd q_1 \, h_{0,0}(q_1)\Big)H_1(\xi) \bigg)
        \\
        & \quad + \int_{t}^T \dd s_1 \,\int_{s_1}^{T} \dd s_2 \, c_{0,1}(s_1)c_{1,0}(s_2) \bigg(2 \int_{t}^{s_1} \dd q_1 \, h_{0,0}(q_1) \int_{t}^{s_2} \dd q_2 \, c_{0,0}(q_2) H_4(\xi) 
        \\
        & \quad +  \Big(2\int_{t}^{s_1} \dd q_1 \, f_{0,0}(q_1)-3\int_{t}^{s_1} \dd q_1 \, h_{0,0}(q_1)\Big)\int_{t}^{s_2} \dd q_2 \, c_{0,0}(q_2)H_3(\xi) 
        \\
        & \quad + \Big(\big(\int_{t}^{s_1} \dd q_1 \, h_{0,0}(q_1)-3 \int_{t}^{s_1} \dd q_1 \, f_{0,0}(q_1)\big)\int_{t}^{s_2} \dd q_2 \, c_{0,0}(q_2) + 3 \int_{t}^{s_1} \dd q_1 \, h_{0,0}(q_1)\Big)H_2(\xi) 
        \\
        & \quad + \Big(\int_{t}^{s_1} \dd q_1 \, f_{0,0}(q_1)(2 +\int_{t}^{s_2} \dd q_2 \, c_{0,0}(q_2))-2\int_{t}^{s_1} \dd q_1 \, h_{0,0}(q_1)\Big) H_1(\xi) - \int_{t}^{s_1} \dd q_1 \, f_{0,0}(q_1) \bigg)
        \\
        & \quad + \int_{t}^T \dd s_1 \,\int_{s_1}^{T} \dd s_2 \,f_{1,0}(s_1)c_{0,1}(s_2)\int_{t}^{s_1} \dd q_1 \, c_{0,0}(q_1)\Big(2H_1(\xi) -1\Big)
        \\
        & \quad + 2\int_{t}^T \dd s_1 \,\int_{s_1}^{T} \dd s_2 \,h_{1,0}(s_1)c_{0,1}(s_2)\int_{t}^{s_1} \dd q_1 \, c_{0,0}(q_1)\Big(2H_2(\xi)  -H_1(\xi)\Big) \Bigg)
        \\
        & \quad -\Sig_{1,0}\Sig_{0,1}\Big(\tau\Sig_0 (H_2(\xi)-H_1(\xi))
    + \frac{1}{\Sig_0} \Big), \\
    \Sig_{0,2} & = \frac{1}{\tau\Sig_0}\Bigg(\frac{1}{2}\int_{t}^T \dd s \, c_{0,2}(s) \bigg(\Big(\int_{t}^{s} \dd q \, h_{0,0}(q)\big)^2 H_2(\xi) + 2\int_{t}^{s} \dd q_1 \, h_{0,0}(q_1)\int_{t}^{s} \dd q_2 \, f_{0,0}(q_2)H_1(\xi) 
    \\
    & \quad + (\int_{t}^{s} \dd q \, f_{0,0}(q))^2 + 2\int_{t}^{s} \dd q \, g_{0,0}(q)\bigg)
    \\
    & \quad + \int_{t}^T \dd s_1 \,\int_{s_1}^{T} \dd s_2 \, c_{0,1}(s_1)c_{0,1}(s_2) \bigg( \int_{t}^{s_1} \dd q_1 \, h_{0,0}(q_1) \int_{t}^{s_2} \dd q_2 \, h_{0,0}(q_2) H_4(\xi)
    \\
    & \quad +  \Big(\int_{t}^{s_1} \dd q_1 \, f_{0,0}(q_1) \int_{t}^{s_2} \dd q_2 \, h_{0,0}(q_2) + \int_{t}^{s_1} \dd q_1 \, h_{0,0}(q_1)\int_{t}^{s_2} \dd q_2 \, f_{0,0}(q_2) -\int_{t}^{s_1} \dd q_1 \, h_{0,0}(q_1) \int_{t}^{s_2} \dd q_2 \, h_{0,0}(q_2) \Big)  H_3(\xi)
    \\
    & \quad + \Big(2\int_{t}^{s_1} \dd q \, g_{0,0}(q) + \int_{t}^{s_1} \dd q_1 \, f_{0,0}(q_1) \int_{t}^{s_2} \dd q_2 \, f_{0,0}(q_2) -\int_{t}^{s_1} \dd q_1 \, f_{0,0}(q_1) \int_{t}^{s_2} \dd q_2 \, h_{0,0}(q_2)
    \\
    & \quad - \int_{t}^{s_2} \dd q_1 \, f_{0,0}(q_1) \int_{t}^{s_1} \dd q_2 \, h_{0,0}(q_2)\Big)H_2(\xi) -\Big(2\int_{t}^{s_1} \dd q \, g_{0,0}(q)+\int_{t}^{s_1} \dd q_1 \, f_{0,0}(q_1) \int_{t}^{s_2} \dd q_2 \, f_{0,0}(q_2)\Big) H_1(\xi)\bigg)
    \\
    & \quad + \int_{t}^T \dd s_1 \,\int_{s_1}^{T} \dd s_2 \, f_{0,1}(s_1)c_{0,1}(s_2)\Big(\int_{t}^{s_1} \dd q \, h_{0,0}(q)H_1(\xi)+\int_{t}^{s_1} \dd q \, f_{0,0}(q)\Big)
    \\
    & \quad + \int_{t}^T \dd s_1 \,\int_{s_1}^{T} \dd s_2 \, h_{0,1}(s_1)c_{0,1}(s_2)\Big(\int_{t}^{s_1} \dd q \, h_{0,0}(q) H_2(\xi) + \int_{t}^{s_1} \dd q \, f_{0,0}(q)H_1(\xi)\Big)\Bigg)
    \\
    & \quad -\frac{\Sig^2_{0,1}}{2}\Big(\tau\Sig_0 (H_2(\xi)-H_1(\xi))
    + \frac{1}{\Sig_0} \Big).
\end{align}
Note that when $d=1$ we have that $\Sig_{1,1} = \Sig_{0,2} = 0$ because in this case $f = g = h = 0$.

\section{Expressions for $F$, $G_1$ and $G_2$ in the Fong-Vasicek setting} 
\label{sec:F-and-G}
We can derive from \eqref{eq:F-ode-FV} and \eqref{eq:G1-ode-FV} that
\begin{align}
    F(t;T,0) & = \kappa_1\theta_1 \int_{t}^T  \dd s \, G_1(s;T,0) +\kappa_2\theta_2\int_{t}^T \dd s \, G_2(s;T,0),
    \\
    G_1(t;T,0) & = \frac{1-\ee^{-\kappa_1(T-t)}}{\kappa_1},
\end{align}
and from $\eqref{eq:G2-ode-FV}$ that
\begin{align}
G_2(t;T,0)  & = \frac{\ee^{-\kappa_1(T-t)}}{\del_2^2 \kappa_1^3} \left(\left(\alphab_1+\alphab_2 \ee^{\kappa_1 (T-t)}\right) + \frac{\betab \lamb U\left(\Phib+1,\Psib+1,\ee^{- \kappa_1 (T-t)} \zetab\right)+\gamb M\left(\Phib+1,\Psib+1,\ee^{- \kappa_1 (T-t)} \zetab\right)}{\lamb U\left(\Phib,\Psib,\ee^{- \kappa_1 (T-t)} \zetab\right)+ M\left(\Phib,\Psib,\ee^{- \kappa_1 (T-t)} \zetab\right)}\right),
\end{align}
where we have introduced constants
\begin{align}
    \alphab & = \alphab_1+\alphab_2, 
    & 
    \alphab_1 & = \del_2 \kappa_1^2 (\rho +i \rhob), 
    &  
    \alphab_2 & = -\kappa_1^2 (\del_2 \rho +\kappa_1 \kappa_2 +\betab_2),
    \\
    \betab & = \del_2 \left(\betab_1+i \rhob (\betab_2+\kappa_1^2)\right),
    & 
    \betab_1 & = \del_2 \rhob^{2} + \rho \kappa_1(\kappa_1-\kappa_2),
    &
    \betab_2 & = \sqrt{(\del_2 \rho +\kappa_1 \kappa_2)^2-\del_2^2},
    \\
    \Phib & = \frac{\Psib}{2}+\frac{\betab_1}{2 i \kappa_1^2 \rhob},
    & 
    \Psib & = \frac{\betab_2}{\kappa_1^2}+1, &
    \zetab & = \frac{i \del_2 \rhob}{\kappa_1^2},
    \\
    \lamb & = -\frac{\gamb M(\Phib+1,\Psib+1,\zetab) + \alphab M(\Phib,\Psib,\zetab)}{\betab U(\Phib+1,\Psib+1,\zetab)+\alphab U(\Phib,\Psib,\zetab)}, 
    & 
    \gamb & = -\frac{2 \Phib \kappa_1^4 \zetab}{\Psib}. 
    &
\end{align}
and where $M$ and $U$ are CHFs of the first kind and second kind, respectively.  Explicitly, we have
\begin{align}
    M(a,b,z) & = \sum_{n=0}^{\infty} \frac{a(a+1)\ldots (a+n)}{b(b+1)\ldots (b+n)} \frac{z^n}{n!},
    \\
    U(a,b,z) & = \frac{\Gam_E (1-b)}{\Gam_E (a+1-b)}M(a,b,z)+\frac{\Gam_E (b-1)}{\Gam_E (a)}z^{1-b}M(a+1-b,2-b,z),
\end{align}
where $\Gam_E$ is the \emph{Euler Gamma function}.

%
%

\bibliography{bibliography}

%
%

\clearpage

\begin{figure}
     \centering
    \includegraphics[width=0.975\textwidth]{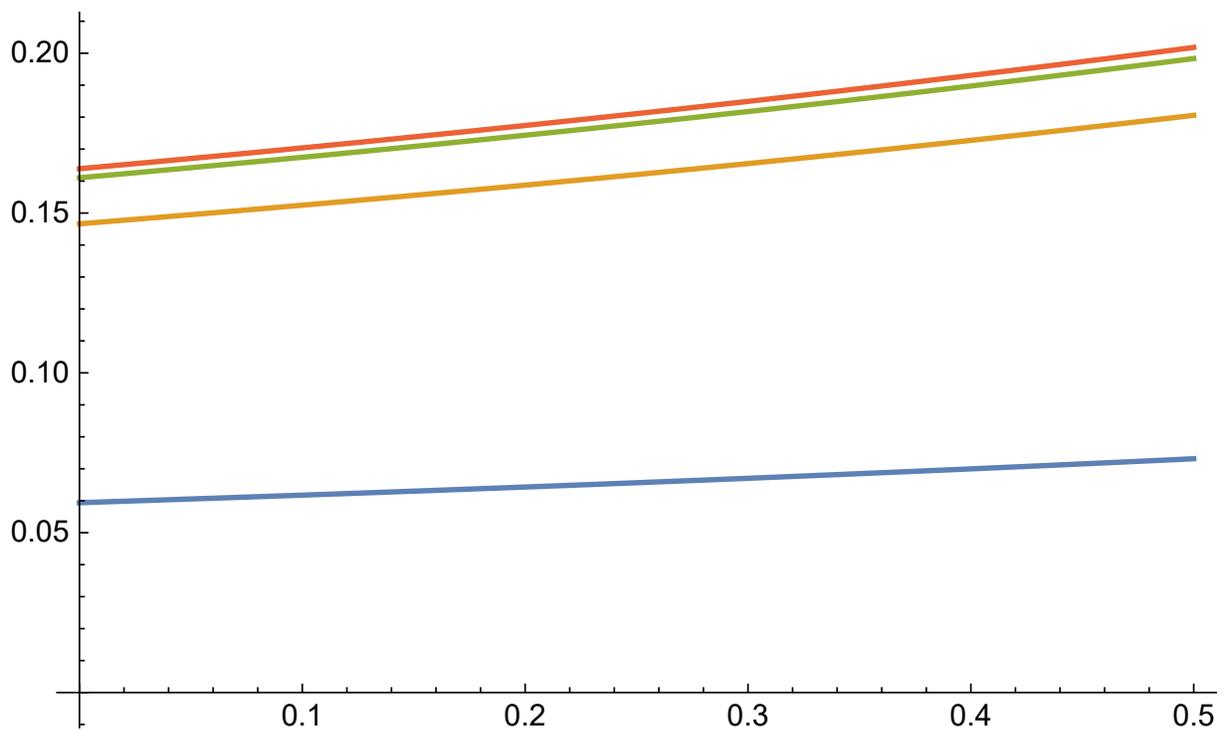}\\
    \caption{For the Vasicek short-rate model described in Section \ref{sec:vasicek}, we plot implied volatility $\Sig$ as a function of $t$ with the maturity date of the options fixed at $T = 0.5$ and with the maturity date of the underlying bond taking the following values $\Tb = \{1, 3, 5, 10\}$, which correspond to the blue, orange, green, and red curves, respectively.  The following model parameters remained fixed: $\kappa = 0.9$, $\delta = \sqrt{0.033}$, and $\theta = \frac{0.08}{0.9}$.
		}
    \label{fig:plot_sigma_vasicek}
\end{figure}

\begin{figure}
\centering
\begin{tabular}{cc}
\includegraphics[width=0.45\textwidth]{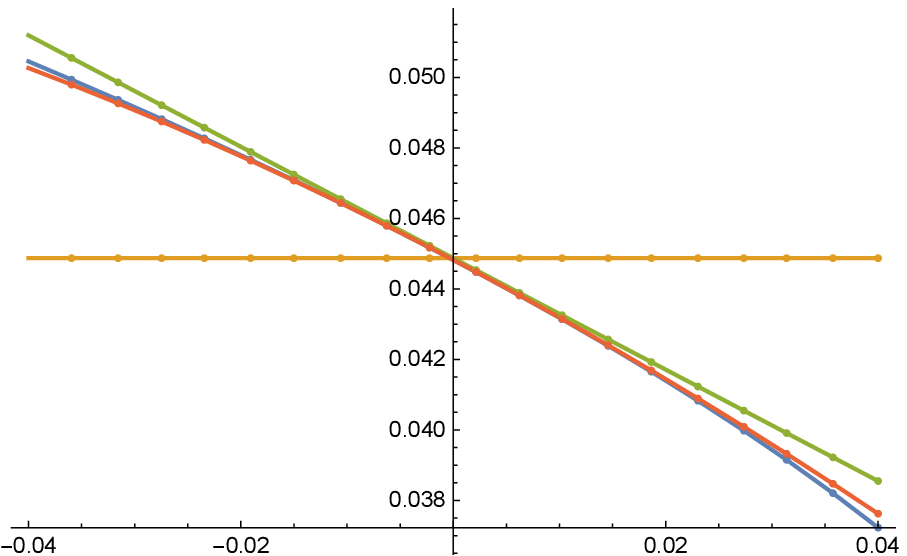}&
\includegraphics[width=0.45\textwidth]{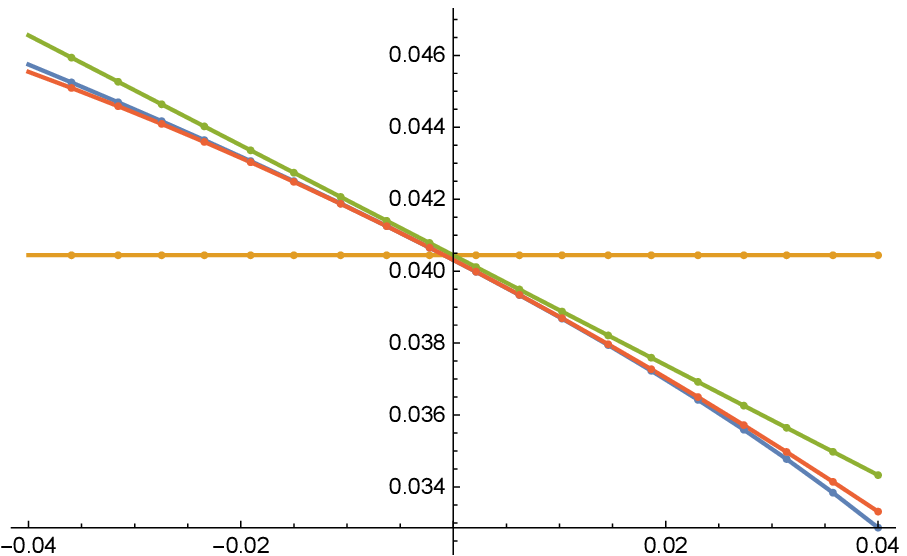}\\
$T = \frac{1}{12}$ & $T = \frac{1}{4}$ \\[1em]
\includegraphics[width=0.45\textwidth]{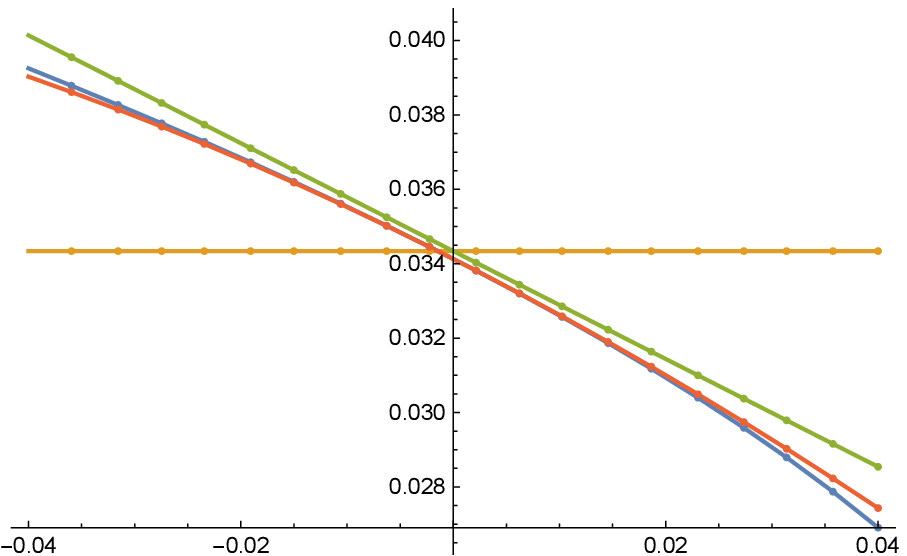}&
\includegraphics[width=0.45\textwidth]{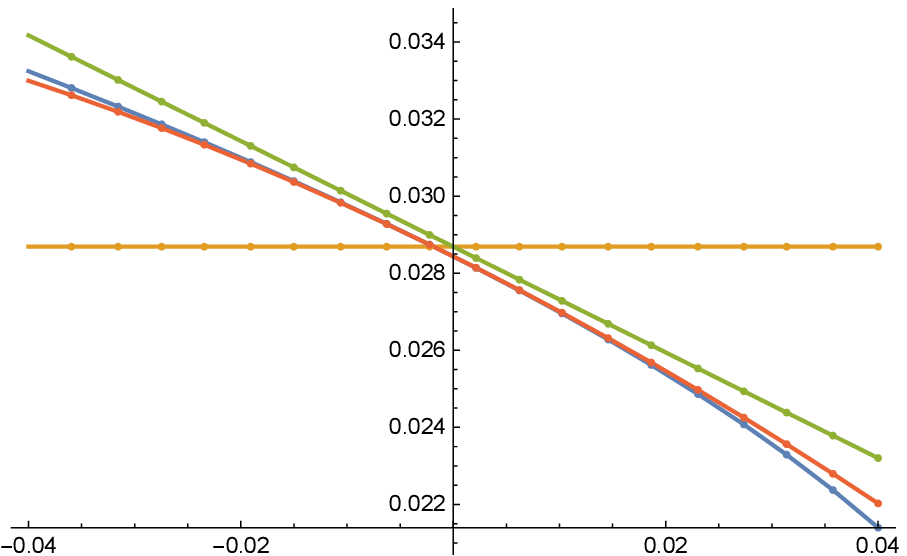}\\
$T = \frac{1}{2}$ & $T = \frac{3}{4}$ 
\end{tabular}
\caption{For the CIR short-rate model described in Section \ref{sec:cir}, we plot exact implied volatility $\Sig$ and approximate implied volatility $\Sigb_n$ up to order $n=2$ as a function of $\log$-moneyness $k-x$ with the maturity date of the bond fixed at $\Tb = 2$ and with the maturity of the option taking the following values $T = \{\frac{1}{12}, \frac{1}{4}, \frac{1}{2}, \frac{3}{4}\}$.  The zeroth, first, and second order approximate implied volatilities correspond to the {orange}, green and red curves, respectively, and the blue {curve} corresponds to the exact implied volatility.  The following parameters, which were taken from \cite[Example 10.3.2.2]{filipovic2009term}, remained fixed $t = 0$, $\kappa = 0.9$, $\delta = \sqrt{0.033}$, $\theta = \frac{0.08}{0.9}$, $y = 0.08$.}
\label{fig:CIR_IV}
\end{figure}

\begin{figure}
\centering
\begin{tabular}{c}
\includegraphics[width=0.8\textwidth]{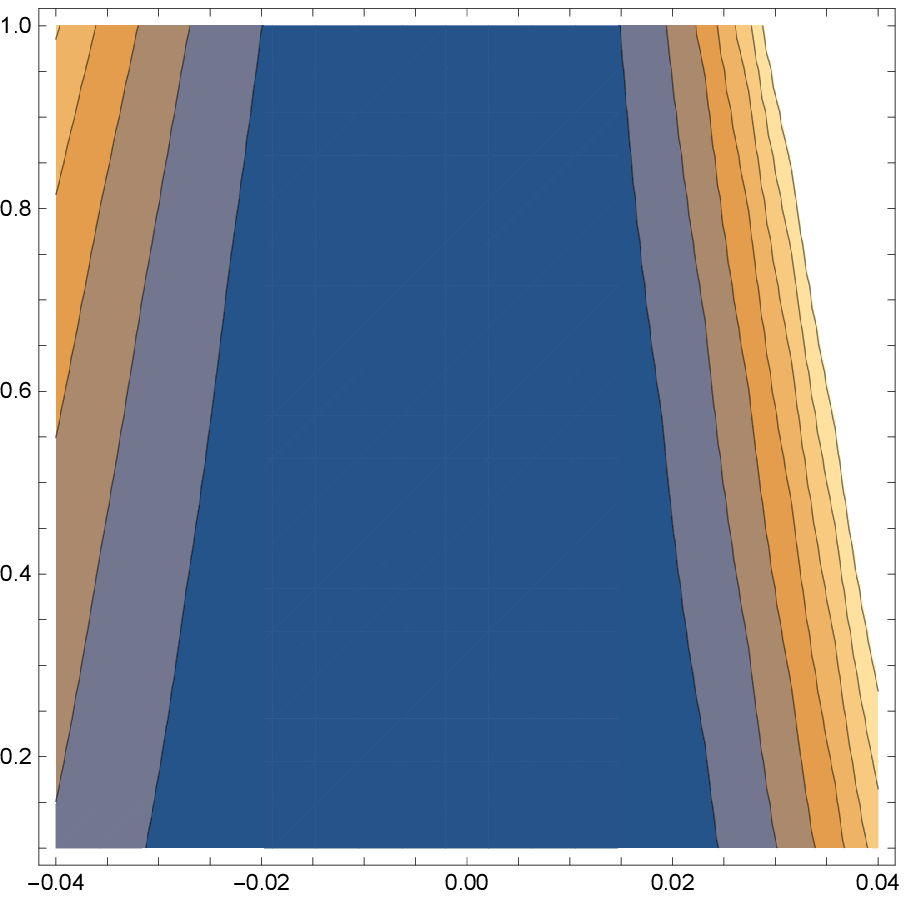}
\end{tabular}
\caption{For the CIR short-rate model described in Section \ref{sec:cir}, we plot the absolute value of the relative error of our second order implied volatility approximation $|\Sigb_2 - \Sig|/\Sig$ as a function of
log-moneyness $(k-x)$ and option maturity $T$. The horizontal axis represents log-moneyness $(k -x)$ and the vertical axis represents option maturity $T$. Ranging from
darkest to lightest, the regions above represent relative errors in increments of $0.2 \%$ from $< 0.2 \%$ to $>1.4 \%$. The maturity date of the bond is fixed at $\Tb = 2$. The following parameters, which were taken from \cite[Example 10.3.2.2]{filipovic2009term}, remained fixed $t = 0$, $\kappa = 0.9$, $\delta = \sqrt{0.033}$, $\theta = \frac{0.08}{0.9}$, $y = 0.08$.}
\label{fig:CIR-Err}
\end{figure}

\begin{figure}
\centering
\begin{tabular}{cc}
\includegraphics[width=0.45\textwidth]{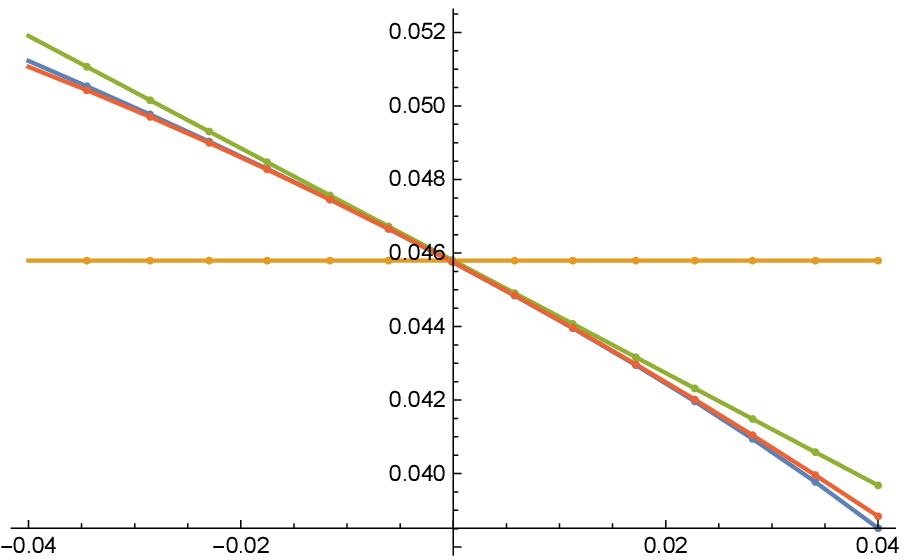}&
\includegraphics[width=0.45\textwidth]{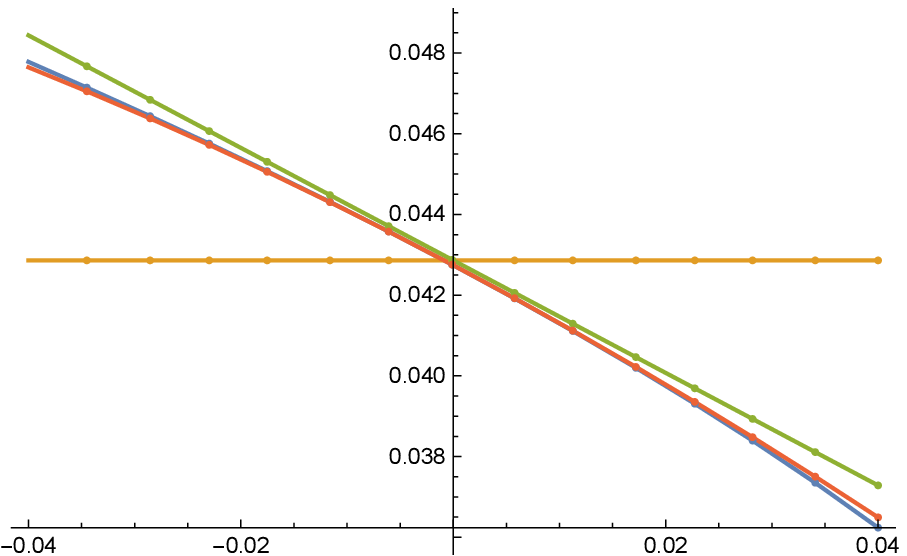}\\
$T = \frac{1}{12}$ & $T = \frac{1}{4}$ \\[1em]
\includegraphics[width=0.45\textwidth]{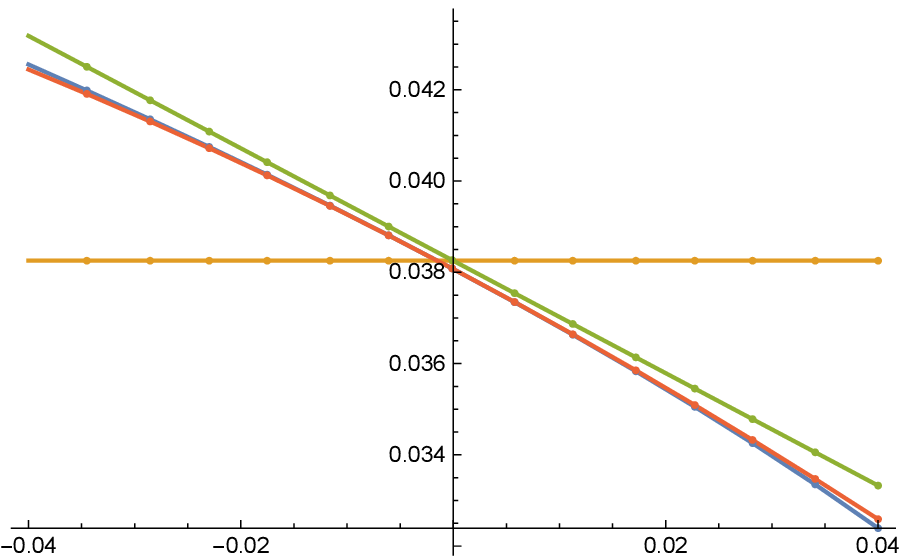}&
\includegraphics[width=0.45\textwidth]{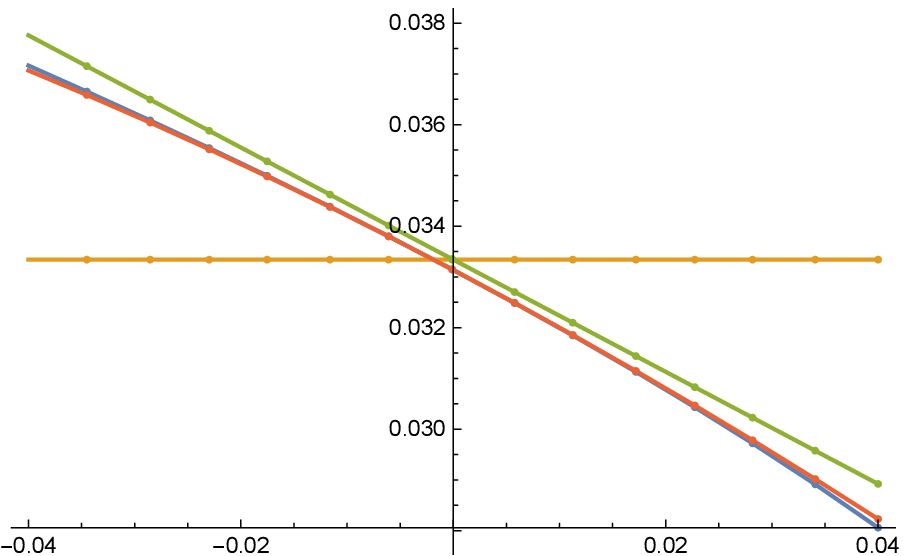}\\
$T = \frac{1}{2}$ & $T = \frac{3}{4}$ 
\end{tabular}
\caption{For the {2-D CIR} short-rate model described in Section \ref{sec:cir2}, we plot exact implied volatility $\Sig$ and approximate implied volatility $\Sigb_n$ up to order $n=2$ as a function of $\log$-moneyness $k-x$ with the maturity date of the bond fixed at $\Tb = 2$ and with the maturity of the option taking the following values $T = \{\frac{1}{12}, \frac{1}{4}, \frac{1}{2}, \frac{3}{4}\}$.  The zeroth, first, and second order approximate implied volatilities correspond to the {orange}, green and red curves, respectively, and the blue {curve} corresponds to the exact implied volatility.  The following parameters remained fixed  $t = 0$, $\kappa_1 = \kappa_2 = 0.9$, $\del_1 = \del_2 = \sqrt{0.033}$, $\theta_1 = \theta_2 =  \frac{0.08}{0.9}$, $y_1 = y_2 = 0.04$.}
\label{fig:2D-CIR-IV}
\end{figure}

\begin{figure}
\centering
\begin{tabular}{c}
\includegraphics[width=0.8\textwidth]{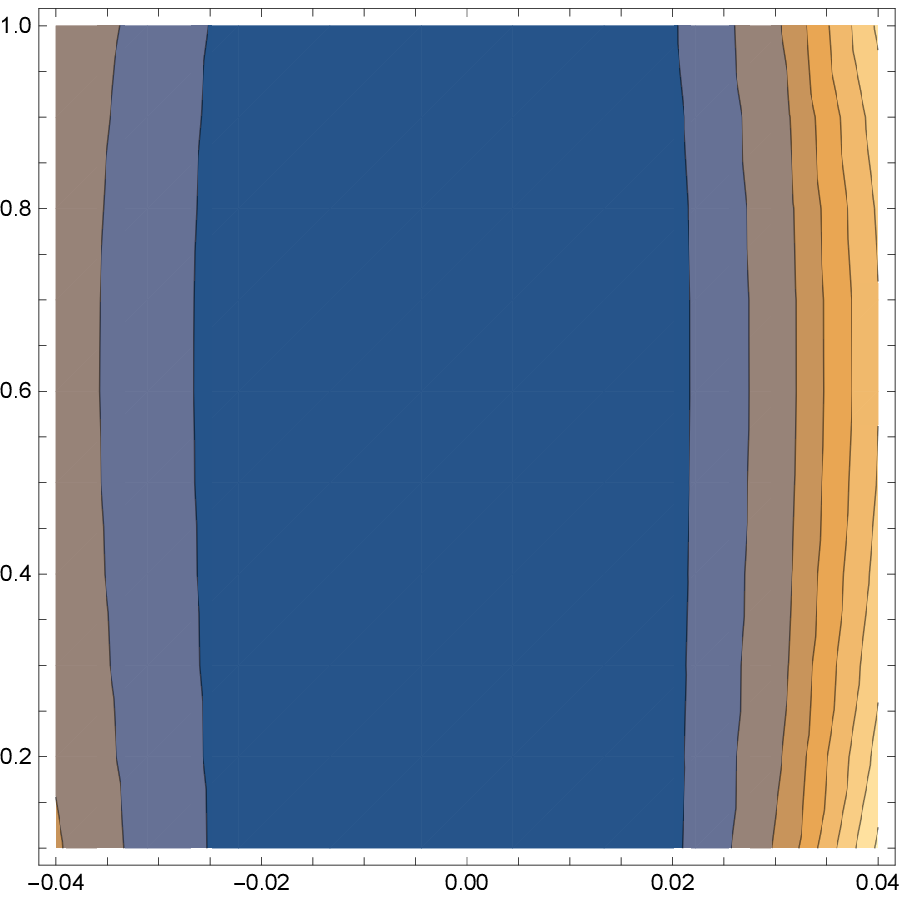}
\end{tabular}
\caption{For the {2-D CIR} short-rate model described in Section \ref{sec:cir2}, we plot the absolute value of the relative error of our second order implied volatility approximation $|\Sigb_2 - \Sig|/\Sig$ as a function of
log-moneyness $(k-x)$ and option maturity $T$. The horizontal axis represents log-moneyness $(k -x)$ and the vertical axis represents option maturity $T$. Ranging from
darkest to lightest, the regions above represent relative errors in increments of $0.1 \%$ from $< 0.1 \%$ to $>0.8 \%$. The maturity date of the bond is fixed at $\Tb = 2$. The following parameters remained fixed  $t = 0$, $\kappa_1 = \kappa_2 = 0.9$, $\del_1 = \del_2 = \sqrt{0.033}$, $\theta_1 = \theta_2 =  \frac{0.08}{0.9}$, $y_1 = y_2 = 0.04$.}
\label{fig:2D-CIR-Err}
\end{figure}

\begin{figure}
\centering
\begin{tabular}{cc}
\includegraphics[width=0.45\textwidth]{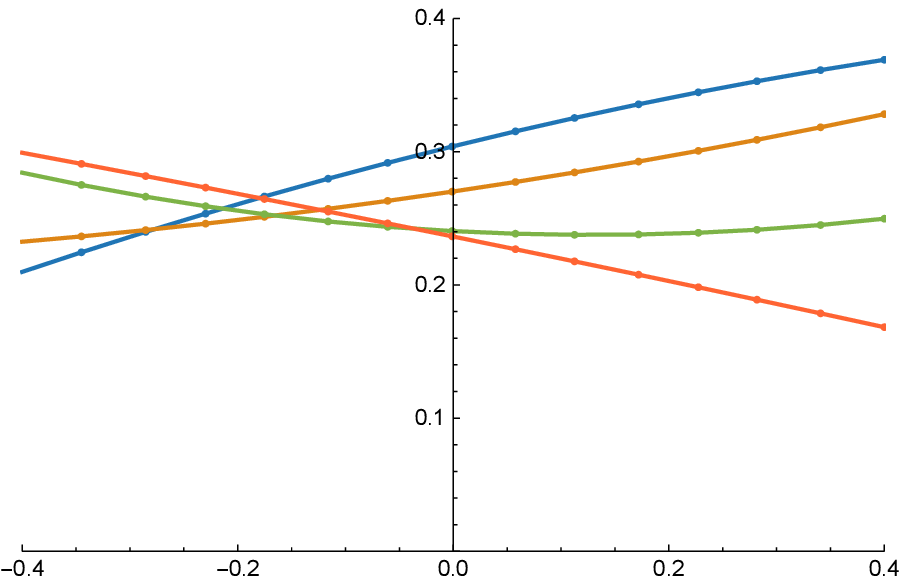}&
\includegraphics[width=0.45\textwidth]{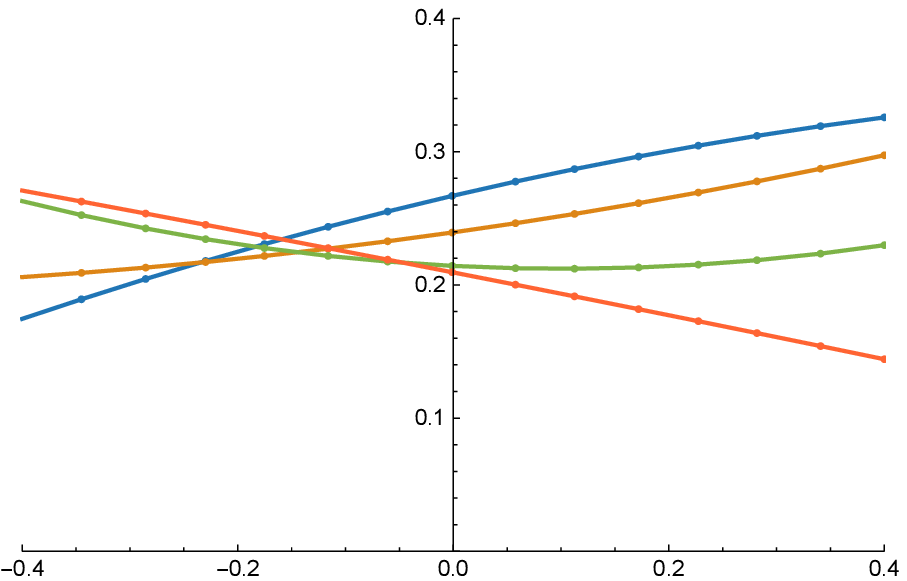}\\
$T = \frac{1}{12}$ & $T = \frac{1}{4}$ \\[1em]
\includegraphics[width=0.45\textwidth]{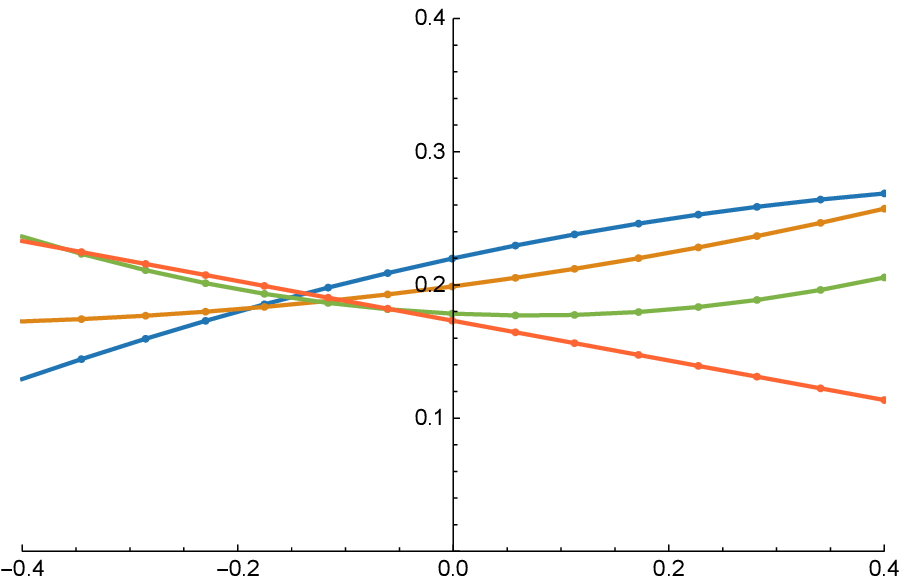}&
\includegraphics[width=0.45\textwidth]{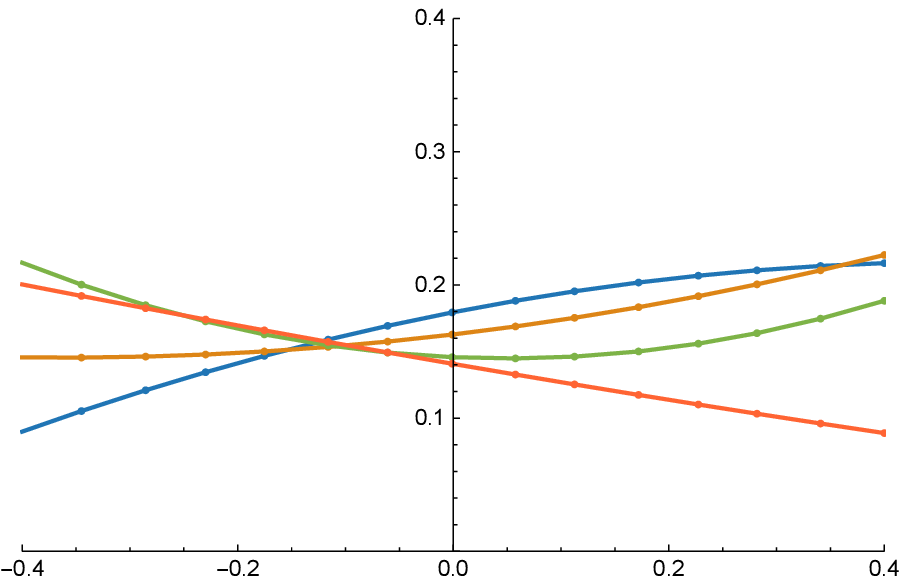}\\
$T = \frac{1}{2}$ & $T = \frac{3}{4}$ 
\end{tabular}
\caption{For the Fong-Vasicek short-rate model described in Section \ref{sec:fong}, we plot the approximate implied volatility $\Sigb_2$ as a function of $\log$-moneyness $k-x$ with the maturity date of the bond fixed at $\Tb = 2$, with the maturity of the option taking the following values $T = \{\frac{1}{12}, \frac{1}{4}, \frac{1}{2}, \frac{3}{4}\}$ and with the correlation parameter taking values $\rho = \{-0.7,-0.3,0.3,0.7\}$ corresponding to the blue, {orange}, green and red curves respectively. The following model parameters remained fixed in all four plots $t=0$, $\kappa_1 = \kappa_2 = 0.9$, $\delta_2 = \sqrt{0.08}$, $\theta_1 = \theta_2 =0.08$, $y_2 =  0.08$.}
\label{fig:plot_sigma_FV}
\end{figure}
	
\end{document}